\def\tr{\text{tr}}
\newcommand{\Dep}{\bm{\Delta}_{\rm prior}}
\newcommand{\Del}{\bm{\Delta}_{\rm learnt}}
\newcommand{\tDecom}{\bm{\tilde{\Delta}}_{\rm com}}
\newcommand{\Decom}{\bm{\Delta}_{\rm com}}
\newcommand{\Det}{\Delta_{\rm tr}}
\newcommand{\tDe}{\tilde{\Delta}}
\newcommand{\tDet}{\tilde{\Delta}_{\rm tr}}
\newcommand{\tDep}{\bm{\tilde{\Delta}}_{\rm prior}}
\newcommand{\tDel}{\bm{\tilde{\Delta}}_{\rm learnt}}
\newtheorem{theorem}{\bf Theorem} \newtheorem{definition}{\bf Definition}
\newtheorem{lemma}{\bf Lemma} \newtheorem{remark}{\bf Remark}
  \newtheorem{proposition}{\bf Proposition}
\newtheorem{assumption}{\bf Assumption} \newtheorem{example}{\bf Example}
\begin{document}
%
\title{Combining Prior Knowledge and Data for Robust Controller Design}
%
%
%

\author{Julian Berberich$^{1}$, Carsten W. Scherer$^{2}$, and Frank Allg\"ower$^{1}$
\thanks{
Funded by Deutsche Forschungsgemeinschaft (DFG, German Research Foundation) under Germany's Excellence Strategy - EXC 2075 - 390740016 and under grant 468094890.
We acknowledge the support by the Stuttgart Center for Simulation Science (SimTech).
The authors thank the International Max Planck Research School for Intelligent Systems (IMPRS-IS) for supporting Julian Berberich.}
\thanks{$^{1}$J. Berberich and F. Allg\"ower are with the University of Stuttgart, Institute for Systems Theory and Automatic Control, 70569 Stuttgart, Germany. {\tt\small E-mail: \{julian.berberich, frank.allgower\}@ist.uni-stuttgart.de}}
\thanks{$^{2}$C. W. Scherer is with the Institute of Mathematical Methods in the Engineering Sciences, Numerical Analysis and Geometrical Modeling, Department of Mathematics, University of Stuttgart, 70569 Stuttgart, Germany. {\tt\small E-mail: carsten.scherer@mathematik.uni-stuttgart.de}}
}

%
%

\markboth{}%
{}
%



\IEEEoverridecommandlockouts

\IEEEpubid{\begin{minipage}{\textwidth}\ \\[12pt] \\ \\
\copyright 2022 IEEE. Personal use of this material is permitted. Permission from IEEE must be obtained for all other uses, in any current or future media, including reprinting/republishing this material for advertising or promotional purposes, creating new collective works, for resale or redistribution to servers or lists, or reuse of any copyrighted component of this work in other works.
\end{minipage}}

\maketitle

\begin{abstract}
We present a framework for systematically combining data of an unknown linear time-invariant system with prior knowledge on the system matrices or on the uncertainty for robust controller design.
Our approach leads to linear matrix inequality (LMI) based feasibility criteria which guarantee stability and performance robustly for all closed-loop systems consistent with the prior knowledge and the available data.
The design procedures rely on a combination of multipliers inferred via prior knowledge and learnt from measured data, where for the latter a novel and unifying disturbance description is employed.
While large parts of the paper focus on linear systems and input-state measurements, we also provide extensions to robust output-feedback design based on noisy input-output data and against nonlinear uncertainties.
We illustrate through numerical examples that our approach provides a flexible framework for simultaneously leveraging prior knowledge and data, thereby reducing conservatism and improving performance significantly if compared to black-box approaches to data-driven control.
\end{abstract}

\begin{IEEEkeywords}
Robust control, data-driven control, linear systems, identification for control, LMIs.
\end{IEEEkeywords}

%
\IEEEpeerreviewmaketitle

\section{Introduction}\label{sec:introduction}
Approaches for controller design based directly on measured data have recently gained increasing attention as they provide many potential benefits if compared to sequential system identification and model-based control.
A key challenge is the development of methods which are simple, i.e., which are less complex than identifying the system and designing a model-based controller, and which provide strong theoretical guarantees, in particular if only finitely many data points are available which may be affected by noise.
However, many existing data-driven control methods are essentially black-box approaches which cannot systematically handle \emph{prior knowledge} about the plant for controller design.
Ultimately, developing tools to merge model-based and data-driven methods in order to simultaneously exploit prior knowledge and data is an important and largely open problem.
In this paper, we present a framework for combining prior knowledge and noisy data of a linear time-invariant (LTI) system for controller design based on robust control theory.

\vskip5pt
{
\noindent
\textit{Related work}}

System identification~\cite{ljung1987system} provides a framework for estimating models from data which can then be used to analyze the system or design a controller.
However, deriving tight error bounds in system identification is a difficult problem in itself and an active field of research even for LTI systems, in particular if non-asymptotic guarantees are desired and the data are perturbed by stochastic noise~\cite{dean2019sample,matni2019self,matni2019tutorial}.
Furthermore, system identification approaches for deterministic noise typically rely on set membership estimation, where providing computationally tractable and tight error bounds from measured data is a key challenge~\cite{belforte1990parameter,milanese1991optimal}.
Thus, exploring alternative approaches for using data directly to design controllers with rigorous end-to-end guarantees is highly interesting and promising, which justifies the
increasing interest in the field~\cite{hou2013model}.
A few selected, established approaches to data-driven control are virtual reference feedback tuning~\cite{campi2002virtual}, unfalsified control~\cite{kosut2001uncertainty}, iterative
feedback tuning~\cite{hjalmarsson1998iterative}, robust control based on frequency domain data~\cite{karimi2017data}, or learning-based model predictive control~\cite{aswani2013provably,berkenkamp2017safe}.
Instead of providing an exhaustive list we refer to~\cite{hou2013model} for an overview of additional existing approaches.

Another recent stream of work which is closely related to the present paper relies on a result from behavioral systems theory.
In~\cite{willems2005note}, it is proven that persistently exciting data can be used directly to parametrize all trajectories of an LTI system, thus providing a promising foundation for data-driven control.
This result has led to the development of various methods for system analysis~\cite{maupong2017lyapunov,romer2019one,koch2021determining,koch2022provably}, state- or output-feedback controller design~\cite{persis2020formulas,persis2021low,berberich2020design,waarde2020informativity,waarde2022from}, model reduction~\cite{monshizadeh2020amidst,burohman2020from,burohman2021from}, internal model control~\cite{rueda2020data}, simulation and optimal control~\cite{markovsky2007on,markovsky2008data}, or predictive control~\cite{yang2015data,coulson2019deepc,berberich2021guarantees,berberich2020constraints,coulson2022distributionally}, all of which are based directly on measured data without any model knowledge.
If compared to many of the existing approaches listed above or in~\cite{hou2013model},
it is a key advancement of those based on~\cite{willems2005note} that they are simple and often come along with strong theoretical guarantees.

Finally, we mention a few selected works on data-driven control which can handle prior knowledge.
First, Gaussian Processes~\cite{rasmussen2006gaussian} permit to incorporate prior knowledge via a suitable choice of the kernel and have found various applications in data-driven control, but only few approaches provide theoretical guarantees~\cite{fiedler2021practical}.

Data-driven control approaches for LPV systems as in~\cite{formentin2016direct} can exploit prior knowledge by selecting a suitable controller parametrization.
Further, the development of loop-shaping controllers based on measured data was tackled in~\cite{karimi2017data}, in which case the choice of the filter represents prior model knowledge for robust controller design.

\vskip5pt
{
\noindent
\textit{Contribution}
}

\vskip2pt
In almost all practical applications, some prior knowledge on the underlying system is available, e.g., in the form of parameter values, plant structure, or, in a robust control context, transfer functions of weights used for loop-shaping design.
While the recent literature has seen a surge of contributions on data-driven control inspired by~\cite{willems2005note}, none of these methods can handle such prior knowledge in a flexible and general fashion.
However, ignoring available knowledge about the system is unduly restrictive when trying to design high-performance robust controllers.
In particular, if the data are not of high quality (i.e., persistently exciting~\cite{willems2005note} with a small noise level), then black-box approaches to data-driven control inherently produce unnecessarily conservative results and may lead to poor performance.

In this paper, we present a flexible framework for systematically combining measured data of an LTI system with prior knowledge for robust controller design.
The proposed framework relies on three main building blocks:
i) Modeling of the partially known system as a linear fractional transformation (LFT), which can elegantly separate known and unknown components;
ii) prior knowledge in the form of multipliers for the unknown components, which can describe bounds as well as structure; and
iii) an input-state trajectory which is affected by a disturbance admitting a general multiplier description and which
is exploited to learn additional multipliers for the unknown system components.
After combining these ingredients into a generic LFT formulation, we design robust controllers with closed-loop stability and performance guarantees for all uncertainties consistent with the prior knowledge and the data.
Due to its flexibility, our approach allows for seamless extensions into different directions, such as nonlinear uncertainties or output-feedback design from input-output data.


\vskip5pt
{
\noindent
\textit{Outline}
}

\vskip2pt
After introducing the problem setup in Section~\ref{sec:setting}, the multipliers inferred from prior knowledge and learnt from data are derived in Section~\ref{sec:parametrization}.
We employ these multipliers to design robust controllers with $\mathcal{H}_2$-performance guarantees in Section~\ref{sec:design}.
In Section~\ref{sec:prior}, we illustrate the flexibility of the proposed framework by discussing numerous forms of prior knowledge that can be covered.
We also address the conservatism of our design approach (Section~\ref{sec:necessity}) as well as controller design in the presence of additional nonlinear uncertainties (Section~\ref{sec:nonlinear}).
Numerical examples to demonstrate the advantages of our framework are presented in Section~\ref{sec:example}.
Finally, the paper is concluded in Section~\ref{sec:conclusion}.

\vskip5pt
{
\noindent
\textit{Notation}
}

\vskip2pt
We write $I_n$ for an $n\times n$ identity matrix, where the index is omitted if the dimension is clear from the context.
Further, we write $\mathrm{diag}_{i=1}^n(A_i)$ for the diagonal matrix with blocks $A_1,\dots,A_n$ and $A\otimes B$ for the Kronecker product of $A$ and $B$.
The space of square-summable sequences is denoted by $\ell_2$.
We write $\lVert x\rVert_{p}$ for the $p$-norm of a vector $x$.
In a matrix inequality, $\star$ represents blocks, which can be inferred from symmetry.
For some generic sequence $\{x_k\}_{k=0}^N$ we define
\begin{align*}
X&\coloneqq\begin{bmatrix}x_0&x_1&\dots&x_{N-1}\end{bmatrix},\>\>
X_+\coloneqq\begin{bmatrix}x_1&x_2&\dots&x_N\end{bmatrix}.
\end{align*}

\section{Problem setup}\label{sec:setting}
\begin{subequations}\label{eq:sys}
In this paper, we consider uncertain LTI systems of the form
\begin{align}\label{eq:sys1}
\left[
\begin{array}{c}x_{k+1}\\\hline e_k\\z_k\end{array}
\right]
&=\left[
\begin{array}{c|ccc}
A&B&B_d&B_w\\\hline
C_e&D_{eu}&D_{ed}&0\\
C_z&D_{z}&0&0\end{array}\right]
\left[
\begin{array}{c}
x_k\\\hline u_k\\d_k\\w_k
\end{array}\right]\\\label{eq:sys2}
w_k&=\Delta_{\tr} z_k,
\end{align}
\end{subequations}
where $x_k\in\mathbb{R}^n$ is the state vector, $u_k\in\mathbb{R}^m$ is the control input, $d_k\in\mathbb{R}^{n_d}$ is an external disturbance input, and $e_k\in\mathbb{R}^{n_e}$ is the performance output, all at time $k\geq0$.
In addition, the variables $w_k$ and $z_k$ represent an uncertainty channel and are of dimension $n_w$ and $n_z$, respectively.
We assume that all matrices in~\eqref{eq:sys} except for the \emph{true} uncertainty matrix $\Delta_{\tr}$ are known.
Hence,~\eqref{eq:sys} is an LFT consisting of an LTI system $\Sigma$ interconnected with a real-valued uncertainty $\Delta_{\tr}\in\mathbb{R}^{n_w\times n_z}$, compare Figure~\ref{fig:LFT}.
\begin{figure}
\begin{center}
\includegraphics[width=0.25\textwidth]{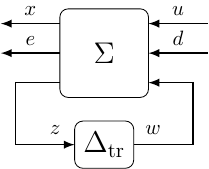}
\end{center}
\caption{Uncertain plant generating data for robust control.}
\label{fig:LFT}
\end{figure}
LFTs provide a flexible description of interconnections of known and unknown system components, and a wide variety of systems with uncertain parameters can be brought to the form~\eqref{eq:sys} (compare~\cite[Section 10]{zhou1996robust}).
In addition to the standard robust control interpretation, the LFT~\eqref{eq:sys} can also be interpreted as a partially known model where the known matrices in~\eqref{eq:sys} encode prior structural knowledge about the system while the uncertainty captures unknown parameters.
In this paper, we propose the use of LFTs for solving problems at the intersection of learning and robust control, where the inclusion of different forms of prior knowledge is of paramount importance.
We note that LFTs have also been used to include prior knowledge in a learning context in the recent work~\cite{holicki2020controller}.

\begin{example}\label{ex:introduction}
Let us illustrate the contributions of the paper by means of a realistic example from~\cite{franklin2019feedback},
a flexible satellite with pointing angle $\theta_1$ carrying an instrument package at angle $\theta_2$, cf.\ Figure~\ref{fig:satellite}.
\begin{figure}
\begin{center}
\includegraphics[width=0.4\textwidth]{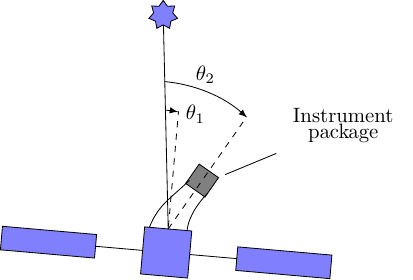}
\end{center}
\caption{Illustration of the satellite system in Example~\ref{ex:introduction}.
This figure as well as the example are adapted from~\cite{franklin2019feedback}.}
\label{fig:satellite}
\end{figure}
The dynamics of this system can be described by
%
\begin{align}\label{eq:ex_satellite}
\begin{bmatrix}\ddot{\theta}_2\\\ddot{\theta}_1\end{bmatrix}
=\begin{bmatrix}
-\frac{k}{J_2}\theta_2-\frac{b}{J_2}\dot{\theta}_2+\frac{k}{J_2}\theta_1+\frac{b}{J_2}\dot{\theta}_1+\frac{1}{J_2}d\\
\frac{k}{J_1}\theta_2+\frac{b}{J_1}\dot{\theta}_2-\frac{k}{J_1}\theta_1-\frac{b}{J_1}\dot{\theta}_1+\frac{1}{J_1}u
\end{bmatrix}
\end{align}
with (unknown) parameters $J_1=1$, $J_2=0.1$, $k=0.91$, $b=0.0036$.
The control input $u$ is a torque actuating the satellite and $d$ is a torque disturbance.
This system can be expressed by the LFT
\begin{align}\label{eq:ex_satellite_LFT}
\dot{x}&=\begin{bmatrix}0&1&0&0\\0&0&0&0\\0&0&0&1\\0&0&0&0\end{bmatrix}x
+\begin{bmatrix}0&0\\1&0\\0&0\\0&1\end{bmatrix}w+\begin{bmatrix}0\\1\\0\\0\end{bmatrix}\tilde{d},\\\nonumber
z&=\begin{bmatrix}I\\0\end{bmatrix}x+\begin{bmatrix}0\\1\end{bmatrix}u,\>\>
w=\Delta_{\tr}z
\end{align}
with $x\coloneqq\begin{bmatrix}\theta_2^\top&\dot{\theta}_2^\top&\theta_1^\top&\dot{\theta}_1^\top\end{bmatrix}^\top$, $\tilde{d}\coloneqq\frac{d}{J_2}$, and the uncertainty
\begin{align*}
\Delta_{\tr}\coloneqq\begin{bmatrix}-\frac{k}{J_2}&-\frac{b}{J_2}&\frac{k}{J_2}&\frac{b}{J_2}&0\\
\frac{k}{J_1}&\frac{b}{J_1}&-\frac{k}{J_1}&-\frac{b}{J_1}&\frac{1}{J_1}\end{bmatrix}.
\end{align*}
Suppose we have a set of (sampled) input-state measurements of~\eqref{eq:ex_satellite_LFT}, which are affected by some unknown but bounded disturbance $\tilde{d}$.
It is obviously possible to design a controller using these data via existing methods such as~\cite{persis2020formulas,berberich2020design,waarde2022from}.
However, it is clearly important to respect the structure of the description
\eqref{eq:ex_satellite_LFT}, which boils down to assuming
that the LFT~\eqref{eq:ex_satellite_LFT} is known except for the parameters collected in $\Delta_{\tr}$.
Moreover, from physical considerations, one might typically have access to (at least crude) bounds on $\Delta_{\tr}$.
Data-driven control methods as in~\cite{persis2020formulas,berberich2020design,waarde2022from}
can neither take
such bounds nor the structure of~\eqref{eq:ex_satellite_LFT} into account at all.
Therefore, unless the data are ideal (i.e., the input is persistently exciting and the noise level is zero), they could fail to achieve high performance for the closed loop.
The methods developed in this paper do allow to systematically include all this available information in the controller
design, 
thereby substantially reducing conservatism and improving performance if compared to purely data-driven approaches (see the numerical example in Section~\ref{sec:example}).
Further, as we show in Section~\ref{subsec:ex_Hinf},
the possibility to incorporate prior knowledge in the form of user-specified performance weights for $\mathcal{H}_{\infty}$-loop-shaping
significantly enhances the practical applicability of data-driven $\mathcal{H}_{\infty}$-control approaches
over existing ones.
\end{example}

Let us now describe the problem setting more precisely.
We assume that some prior information on the uncertainty $\Delta_{\tr}$ is available, i.e., $\Delta_{\tr}\in\bm{\Delta}_{\mathrm{prior}}$ for some known set $\bm{\Delta}_{\mathrm{prior}}\subseteq\mathbb{R}^{n_w\times n_z}$.
Moreover, we are given an input-state trajectory $\{x_k\}_{k=0}^N$, $\{u_k\}_{k=0}^{N-1}$ of~\eqref{eq:sys} for an \emph{unknown} disturbance sequence $\{\hat{d}_k\}_{k=0}^{N-1}$, which is collected in the matrix $D_{\mathrm{tr}}=\begin{bmatrix}\hat{d}_0&\dots&\hat{d}_{N-1}\end{bmatrix}$.
For the disturbance, we assume to have prior knowledge in the form of $D_{\tr}\in\bm{D}$ for some set $\bm{D}$
available.
With $X_+$, $X$ and $U$ defined for the given sequences $\{x_k\}_{k=0}^{N}$, $\{u_k\}_{k=0}^{N-1}$, let us note that
\eqref{eq:sys} implies $X_+=AX+BU+B_dD+B_w\Delta_{\tr}(C_zX+D_zU)$. With
\begin{align}\label{eq:MZ_def}
M&\coloneqq X_+-AX-BU,\>\>Z\coloneqq C_zX+D_{z}U,
\end{align}
this motivates to define the
set of \emph{learnt} uncertainties $\Delta$ as
\begin{align}\label{eq:Sigma_Delta_def}
\Del\coloneqq\Big\{\Delta\mid &M=B_w\Delta Z+B_dD\>\>\text{for some}\>\>D\in\bm{D}\Big\}.
\end{align}
In fact, this set comprises all uncertainties $\Delta$ that are consistent with the data and some disturbance in $\bm{D}$.
%
Note that the \emph{true} uncertainty is guaranteed to lie in $\Del$, i.e., $\Delta_{\tr}\in\Del$.
Therefore, it is also contained in the \emph{combined} uncertainty set
\begin{align}\label{eq:Decom_def}
&\Decom\coloneqq\Dep\cap\Del\\\nonumber
&=\left\{\Delta\in\Dep \mid M=B_w\Delta Z+B_dD\>\>\text{for some}\>\>D\in\bm{D}\right\}.
\end{align}
In this paper, we present a novel framework for combining prior knowledge with measured data via a tractable and tight description of the set $\Decom$ in terms of so-called multipliers (Section~\ref{sec:parametrization}).
Based on this description, we design state-feedback controllers $u_k=Kx_k$ with robust $\mathcal{H}_2$-performance guarantees (Section~\ref{subsec:design_H2}) and with quadratic performance guarantees in the presence of \emph{nonlinear} uncertainties (Section~\ref{sec:nonlinear}), as well as output-feedback controllers (Section~\ref{subsec:design_output}).

It is important to stress that prior knowledge about the uncertainty structure or bounds thereof, or of the system matrices in~\eqref{eq:sys}, is not required to apply the methodology in this paper. In fact,
our results contain a data-driven setting without any prior knowledge at all as a special case, cf.\ Example~\ref{ex:data1}. 
Nevertheless, any additional prior knowledge may shrink the set $\Decom$ and can thus reduce conservatism and improve the performance of the resulting robust controller if compared to a design based only on the available data.
On the other hand, our approach also applies if no data are available but only prior model knowledge and an uncertainty description are given, i.e., $\Decom=\Dep$.
In this sense,
the paper resolves the dichotomy of purely model-based and purely data-driven control.
Not only do we arrive at a language
for modeling problems that are positioned between these two established scenarios, but
we also showcase its flexibility in that it enables a variety of unprecedented extensions.

\subsubsection*{Prior knowledge on the uncertainty}
Let us describe the prior knowledge $\Delta_{\tr}\in\bm{\Delta}_{\mathrm{prior}}$ in more detail.
We consider a block-diagonal structure, i.e.,
\begin{align}\label{eq:prior_structure}
\Delta_{\tr}=\text{diag}_{j=1}^{\ell}(\Delta_j),
\end{align}
where $\Delta_j\in\mathbb{R}^{n_{w,j}\times n_{z,j}}$ is either a full matrix block or a repeated scalar block $\Delta_j=\delta_jI_{n_{w,j}}$, $\delta_j\in\mathbb{R}$, which can capture multiple occurrences of the same uncertain parameter in~\eqref{eq:sys}.
For $\Delta\in\Dep$, we make use of the description
\begin{align}\label{eq:Delta_diag_L_R}
    \Delta &=\mathrm{diag}_{j=1}^{\ell}(\Delta_j)=\sum_{j=1}^{\ell}R_j\Delta_j L_j^T,
\end{align}
where $L_j$ and $R_j$ are the corresponding  block-columns of the identity matrix (satisfying $L_k^\top L_j=0$, $R_k^\top R_j=0$ for $k\neq j$).
We partition $B_w$ according to the structure of $\Delta$ as $B_w=\begin{bmatrix}B_1&\dots&B_{\ell}\end{bmatrix}$ with $B_j\coloneqq B_w R_j$, and we assume without loss of generality that the matrices $B_j$ have full column rank.
Then,~\eqref{eq:Delta_diag_L_R} implies $\Delta L_j=R_j\Delta_j$ and thus
\begin{align}\label{eq:Bw_Delta_Lj_equation}
    B_w\Delta L_j=B_j\Delta_j.
\end{align}
We assume that a multiplier description for the individual uncertainties $\Delta_j$ is available as follows.
\begin{assumption}\label{ass:prior_multipliers}
The set $\bm{\Delta}_{\mathrm{prior}}$ describing the prior knowledge on $\Delta_{\tr}\in\Dep$ is described by
\begin{align}\label{eq:prior_set_Delta}
\bm{\Delta}_{\mathrm{prior}}=\left\{\Delta=\mathrm{diag}_{j=1}^{\ell}(\Delta_j)\mid \Delta_j\in\bm{\Delta}_j\right\},
\end{align}
where $\bm{\Delta}_j$ is defined as
\begin{align}\label{eq:prior_set_Delta_i}
    \bm{\Delta}_j=\left\{\Delta_j\in\mathbb{R}^{n_{w,j}\times n_{z,j}}\Bigm|
    \begin{bmatrix}\Delta_j^\top\\I\end{bmatrix}^\top\!\!
    P_j\begin{bmatrix}\Delta_j^\top\\I\end{bmatrix}\succeq0\>\>\forall P_j\in\bm{P}_j\right\}
\end{align}
and where $\bm{P}_j$ are convex cones of symmetric matrices admitting an LMI representation. This means that they can be expressed as the projection of the solution set of a strictly feasible LMI.
Moreover, for any $j=1,\dots,\ell$, there exists some $P_j\in\bm{P}_j$ such that $\begin{bmatrix}I_{n_{z,j}}&0\end{bmatrix} P_j\begin{bmatrix}I_{n_{z,j}}&0\end{bmatrix}^\top\prec0$ holds true.
\end{assumption}
As discussed with various examples in Section~\ref{subsec:prior_uncertainty}, Assumption~\ref{ass:prior_multipliers} provides a quite general and flexible description of possible prior knowledge on the uncertainty $\Delta_{\mathrm{tr}}$.

\subsubsection*{Prior knowledge on the disturbance}
In addition to the above prior uncertainty description $\Dep$, an input-state trajectory $\{x_k\}_{k=0}^N$, $\{u_k\}_{k=0}^{N-1}$ of~\eqref{eq:sys} generated by an unknown disturbance sequence $\{\hat{d}_k\}_{k=0}^{N-1}$ is available.
Throughout this paper, we assume that the corresponding input matrix $B_d$ has full column rank.
This causes no loss of generality since, otherwise, we can define a different disturbance $\tilde{d}$ which exerts the same influence on~\eqref{eq:sys} while $B_{\tilde{d}}$ has full column rank.
Further, we assume that $D_{\mathrm{tr}}=\begin{bmatrix}\hat{d}_0&\dots&\hat{d}_{N-1}\end{bmatrix}$ satisfies a multiplier description analogous to Assumption~\ref{ass:prior_multipliers}.
\begin{assumption}\label{ass:noise_multipliers}
The noise which generates the available sequence satisfies $D_{\tr}\in\bm{D}$, where
\begin{align}\label{eq:noise_multipliers}
\bm{D}\coloneqq\Big\{D\in\mathbb{R}^{n_d\times N}\Bigm|\begin{bmatrix}D^\top\\I\end{bmatrix}^\top\!\! P_d\begin{bmatrix}D^\top\\I\end{bmatrix}\succeq0
\>\>\forall P_d\in\bm{P}_d\Big\}
\end{align}
and where $\bm{P}_d$ is a convex cone of symmetric matrices admitting an LMI representation.
Moreover, there exists $P_d\in\bm{P}_d$ such that $\begin{bmatrix}I_N&0\end{bmatrix}P_d\begin{bmatrix}I_N&0\end{bmatrix}^\top\prec0$.
\end{assumption}
In Section~\ref{subsec:prior_disturbance}, we illustrate the flexibility of Assumption~\ref{ass:noise_multipliers} by discussing various important special cases.

\begin{example}\label{ex:data1}
Let us consider the purely data-driven case, i.e.,
\begin{align}\label{eq:ex_data1}
x_{k+1}=A_{\tr}x_k+B_{\tr}u_k+B_dd_k,
\end{align}
where $A_{\tr}$, $B_{\tr}$ are unknown and $B_d$ is known.
This system can be brought to the form~\eqref{eq:sys} by choosing $A=0$, $B=0$, $B_w=I$, $C_z=\begin{bmatrix}I&0\end{bmatrix}^\top$, $D_{z}=\begin{bmatrix}0&I\end{bmatrix}^\top$, and $\Delta_{\tr}=\begin{bmatrix}A_{\tr}&B_{\tr}\end{bmatrix}$.
The recent papers~\cite{persis2020formulas,berberich2020design,waarde2022from} have addressed data-driven controller design for~\eqref{eq:ex_data1} with a specific choice of $\bm{P}_d$ and without additional prior knowledge on $A_{\tr}$ or $B_{\tr}$ (i.e., $\Dep=\mathbb{R}^{n\times(n+m)}$).
As we show and discuss throughout the manuscript, the flexibility gained via the proposed framework, in particular by choosing more general multipliers $\bm{P}_d$ and prior descriptions $\Dep$, allows for a significant generalization and improvement in comparison to these earlier works.
\end{example}

\section{Multipliers from prior knowledge and data}\label{sec:parametrization}

In this section, we present a systematic approach to construct multipliers for $\Delta_{\tr}$ based on prior knowledge and noisy data.
First, we show how prior multipliers as in Assumption~\ref{ass:prior_multipliers} can be equivalently translated into multipliers for a transformed full-block uncertainty in Section~\ref{subsec:param_prior}.
Next, in Section~\ref{subsec:param_data}, we use data affected by a disturbance satisfying Assumption~\ref{ass:noise_multipliers} in order to learn additional multipliers for this transformed uncertainty.
Combining these results, we obtain multipliers for all uncertainties $\Delta\in\Decom$ consistent with the prior knowledge and the data (Section~\ref{subsec:param_combined}).

\subsection{Multipliers from prior knowledge}\label{subsec:param_prior}
Let us consider the uncertain system
\begin{align}\label{eq:sys_prior}%
\text{\eqref{eq:sys1} with}\>\>w_k=\Delta z_k
\end{align}%
for $\Delta\in\bm{\Delta}_{\mathrm{prior}}$ with
$\Dep$ given as in Assumption~\ref{ass:prior_multipliers}.
Since $\Delta_{\tr}\in\bm{\Delta}_{\mathrm{prior}}$, this family of systems comprises the true system~\eqref{eq:sys} under consideration.
Note that $\tilde{w}_k\coloneqq B_ww_k$ satisfies $\tilde{w}_k=B_w\Delta z_k=\tilde{\Delta}z_k$ with $\tilde{\Delta}\coloneqq B_w\Delta$.
Thus,~\eqref{eq:sys_prior} is equivalent to
\begin{subequations}\label{eq:sys_prior_tilde}%
\begin{align}\label{eq:sys_prior_tilde1}
\left[
\begin{array}{c}x_{k+1}\\\hline e_k\\z_k\end{array}
\right]
&=\left[
\begin{array}{c|ccc}
A&B&B_d&I\\\hline
C_e&D_{eu}&D_{ed}&0\\
C_z&D_{z}&0&0\end{array}\right]
\left[
\begin{array}{c}
x_k\\\hline u_k\\d_k\\\tilde{w}_k
\end{array}\right]\\\label{eq:sys_prior_tilde2}
\tilde{w}_k&=\tilde{\Delta} z_k,
\end{align}
\end{subequations}
where $\tilde{\Delta}\in B_w\bm{\Delta}_{\mathrm{prior}}$.
We partition the columns of $\tilde{\Delta}=\begin{bmatrix}\tilde{\Delta}_1&\dots&\tilde{\Delta}_{\ell}\end{bmatrix}$ according to the structure of $B_w\Delta$, i.e., $\tilde{\Delta}_j\coloneqq \tilde{\Delta}L_j=B_w\Delta L_j\stackrel{\eqref{eq:Bw_Delta_Lj_equation}}{=}B_j\Delta_j$.
Then, we define the convex cone of transformed and combined multipliers
\begin{align}\label{eq:prior_multiplier_transformed}
    \bm{\tilde{P}}\coloneqq \sum_{j=1}^{\ell}\begin{bmatrix}L_j^\top&0\\0&B_j^\top\end{bmatrix}^\top\bm{P}_j
    \begin{bmatrix}L_j^\top&0\\0&B_j^\top\end{bmatrix}
\end{align}
as well as the corresponding set of uncertainties
\begin{align}\label{eq:tilde_Delta_prior_def}
    \tilde{\bm{\Delta}}_{\mathrm{prior}}\coloneqq\left\{
    \tilde{\Delta}\in\mathbb{R}^{n\times n_z}\Bigm|\begin{bmatrix}\tilde{\Delta}^\top\\I\end{bmatrix}^\top\!\!\!\tilde{P}
    \begin{bmatrix}\tilde{\Delta}^\top\\I\end{bmatrix}\succeq0\>\>\forall\tilde{P}\in\bm{\tilde{P}}\right\}.
\end{align}
The following result shows that the uncertainty description $\tilde{\Delta}\in\tilde{\bm{\Delta}}_{\mathrm{prior}}$ is \emph{equivalent} to $\tilde{\Delta}\in B_w\bm{\Delta}_{\mathrm{prior}}$.
\begin{lemma}\label{lem:prior_bounds}
If Assumption~\ref{ass:prior_multipliers} holds, then $\tilde{\bm{\Delta}}_{\mathrm{prior}}=B_w\bm{\Delta}_{\mathrm{prior}}$.
\end{lemma}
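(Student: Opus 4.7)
The plan is to establish $\tilde{\bm{\Delta}}_{\mathrm{prior}} = B_w \bm{\Delta}_{\mathrm{prior}}$ by proving the two inclusions separately. The inclusion $B_w \bm{\Delta}_{\mathrm{prior}} \subseteq \tilde{\bm{\Delta}}_{\mathrm{prior}}$ is a direct computation: given $\Delta = \mathrm{diag}_j(\Delta_j)$ with $\Delta_j \in \bm{\Delta}_j$ and any $\tilde{P} \in \bm{\tilde{P}}$ of the form \eqref{eq:prior_multiplier_transformed}, the outer matrix $\begin{bmatrix}L_j^\top & 0 \\ 0 & B_j^\top\end{bmatrix}$ acts on the test vector $\begin{bmatrix}\tilde{\Delta}^\top \\ I\end{bmatrix}$ and yields $\begin{bmatrix}\tilde{\Delta}_j^\top \\ B_j^\top\end{bmatrix}$. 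Using $L_j^\top \tilde{\Delta}^\top = (\tilde{\Delta} L_j)^\top = (B_w \Delta L_j)^\top \stackrel{\eqref{eq:Bw_Delta_Lj_equation}}{=} \Delta_j^\top B_j^\top$, this factors further as $\begin{bmatrix}\Delta_j^\top \\ I\end{bmatrix} B_j^\top$, so each summand of the quadratic form collapses to $B_j \bigl(\begin{bmatrix}\Delta_j^\top \\ I\end{bmatrix}^\top P_j \begin{bmatrix}\Delta_j^\top \\ I\end{bmatrix}\bigr) B_j^\top \succeq 0$ by Assumption~\ref{ass:prior_multipliers}.

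The reverse inclusion $\tilde{\bm{\Delta}}_{\mathrm{prior}} \subseteq B_w \bm{\Delta}_{\mathrm{prior}}$ is the substantive part, and is where the strict negative definiteness clause $\begin{bmatrix}I & 0\end{bmatrix} P_j \begin{bmatrix}I & 0\end{bmatrix}^\top \prec 0$ of Assumption~\ref{ass:prior_multipliers} becomes indispensable. Given $\tilde{\Delta} \in \tilde{\bm{\Delta}}_{\mathrm{prior}}$, the first step is to decouple the blocks: since each $\bm{P}_j$ is a convex cone and therefore contains the zero matrix, one can select $\tilde{P} \in \bm{\tilde{P}}$ with only the $j$-th summand nonzero, yielding the block-by-block inequality
\begin{align*}
\begin{bmatrix}\tilde{\Delta}_j^\top \\ B_j^\top\end{bmatrix}^\top P_j \begin{bmatrix}\tilde{\Delta}_j^\top \\ B_j^\top\end{bmatrix} \succeq 0
\end{align*}
for every $P_j \in \bm{P}_j$ and every $j = 1, \ldots, \ell$, where $\tilde{\Delta}_j := \tilde{\Delta} L_j$.

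The crucial step is to show that the columns of each $\tilde{\Delta}_j$ lie in the range of $B_j$, so that the factorization $\tilde{\Delta}_j = B_j \Delta_j$ is well-defined (and unique, by full column rank of $B_j$). Pick a $P_j \in \bm{P}_j$ with $Q_j := \begin{bmatrix}I & 0\end{bmatrix} P_j \begin{bmatrix}I & 0\end{bmatrix}^\top \prec 0$, and let $B_j^\perp$ denote a basis matrix of the null space of $B_j^\top$. Multiplying the block-by-block inequality from the left by $(B_j^\perp)^\top$ and from the right by $B_j^\perp$ annihilates every term containing $B_j^\top$ on the right, leaving $(B_j^\perp)^\top \tilde{\Delta}_j Q_j \tilde{\Delta}_j^\top B_j^\perp \succeq 0$. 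Combined with $Q_j \prec 0$, this forces $\tilde{\Delta}_j^\top B_j^\perp = 0$, which precisely says that the columns of $\tilde{\Delta}_j$ are orthogonal to the null space of $B_j^\top$, hence contained in $\mathrm{range}(B_j)$.

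Finally, substituting $\tilde{\Delta}_j = B_j \Delta_j$ back into the block inequality rewrites it as $B_j \bigl(\begin{bmatrix}\Delta_j^\top \\ I\end{bmatrix}^\top P_j \begin{bmatrix}\Delta_j^\top \\ I\end{bmatrix}\bigr) B_j^\top \succeq 0$ for every $P_j \in \bm{P}_j$; since $B_j$ has full column rank, $B_j^\top$ is surjective and the inner bracketed matrix is itself positive semidefinite, giving $\Delta_j \in \bm{\Delta}_j$. Assembling $\Delta = \mathrm{diag}_j(\Delta_j) \in \bm{\Delta}_{\mathrm{prior}}$ and using $B_w \Delta = \sum_j B_j \Delta_j L_j^\top = \sum_j \tilde{\Delta}_j L_j^\top = \tilde{\Delta}$ (from \eqref{eq:Delta_diag_L_R}, \eqref{eq:Bw_Delta_Lj_equation}, and the partition-of-unity identity $\sum_j L_j L_j^\top = I$) closes the argument. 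The main obstacle I anticipate is the range-inclusion step of the third paragraph: the strict negative-definiteness clause in Assumption~\ref{ass:prior_multipliers} looks technical, but is exactly the condition that isolates the structural from the magnitude content of the multipliers and allows recovery of the block decomposition of $\tilde{\Delta}$ from the transformed description.
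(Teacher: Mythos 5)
Your proposal is correct and follows essentially the same route as the paper's proof: the easy inclusion by multiplying the block multipliers with $B_j$, $B_j^\top$, and the converse by decoupling the blocks via $0\in\bm{P}_k$, using a multiplier with negative-definite left-upper block $Q_j$ to force $\mathrm{range}(\tilde{\Delta}_j)\subseteq\mathrm{range}(B_j)$, and then exploiting full column rank of $B_j$ to recover $\Delta_j\in\bm{\Delta}_j$. The only cosmetic difference is that you phrase the range-inclusion step with a null-space basis $B_j^\perp$ rather than testing against individual vectors $v$ with $B_j^\top v=0$, which is equivalent.
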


The proof of Lemma~\ref{lem:prior_bounds} is provided in Appendix~\ref{app:A}.
Lemma~\ref{lem:prior_bounds} shows that $\bm{\tilde{P}}$ in~\eqref{eq:prior_multiplier_transformed} is a valid class of multipliers for the \emph{transformed} uncertainty $\tilde{\Delta}$.
Notably, $\bm{\tilde{P}}$ takes the prior bounds on $\Delta$ in Assumption~\ref{ass:prior_multipliers} into account without losing the information that $\tilde{\Delta}$ is of the form $\tilde{\Delta}=B_w\Delta$.
This is made possible by allowing for \emph{non-strict} matrix inequalities in~\eqref{eq:tilde_Delta_prior_def}.
To the best of our knowledge, the insight of translating structural knowledge for the uncertainty channel into multipliers for a transformed full-block uncertainty is new.
It is a key step to consider $\tilde{\Delta}$ instead of $\Delta$, since it is possible to systematically learn multipliers from data for $\tilde{\Delta}$, as seen next.

\subsection{Learning multipliers from data}\label{subsec:param_data}
Recall that we have available an input-state trajectory of~\eqref{eq:sys} for some unknown disturbance sequence satisfying Assumption~\ref{ass:noise_multipliers}. These data permit to learn the new constraint $\Del$ in \eqref{eq:Sigma_Delta_def}
on the uncertainties $\Delta$, with the guarantee that it still contains the true uncertainty, i.e., $\Det\in\Del$.
In the following, we express the learnt information by a family of quadratic constraints that is amenable to robust control.
For this purpose, we need to work with a parametrization of $B_w\Del$.
With $M$ and $Z$ in~\eqref{eq:MZ_def} for the corresponding matrices depending on the data and $A$, $B$, $C_z$, $D_{z}$, we introduce the multiplier class
\begin{align}\label{eq:data_multipliers}
    \bm{\tilde{P}}_0&\coloneqq\begin{bmatrix}-Z^\top&M^\top\\0&B_d^\top\end{bmatrix}^\top
    \bm{P}_d
    \begin{bmatrix}-Z^\top&M^\top\\0&B_d^\top\end{bmatrix}
\end{align}
as well as the corresponding uncertainty set
\begin{align*}
    \tDel\coloneqq\Big\{\tilde{\Delta}\Bigm|
    \begin{bmatrix}\tilde{\Delta}^\top\\I\end{bmatrix}^\top
    \tilde{P}_0\begin{bmatrix}\tilde{\Delta}^\top\\I\end{bmatrix}\succeq0\>\>\text{for all}\>\>\tilde{P}_0\in\bm{\tilde{P}}_0\Big\}.
\end{align*}
\begin{lemma}\label{lem:data_bound}
If Assumption~\ref{ass:noise_multipliers} holds, then $B_w\Del\subseteq\tDel$.
\end{lemma}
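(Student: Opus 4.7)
The statement is essentially a direct computation once the right change of variables is made, and the derivations sketched just before the lemma already outline the key algebraic identity. The plan is to fix an arbitrary $\Delta\in\Del$, set $\tDe=B_w\Delta$, and verify that $\tDe$ satisfies the defining quadratic inequality of $\tDel$ against every multiplier $\tilde{P}_0\in\bm{\tilde{P}}_0$.

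First, I would unfold the definition of $\Del$ in~\eqref{eq:Sigma_Delta_def}: there exists $D\in\bm{D}$ such that $M=B_w\Delta Z+B_dD=\tDe Z+B_dD$, i.e.\ $M-\tDe Z=B_dD$. Next, given any $\tilde{P}_0\in\bm{\tilde{P}}_0$, I would pick the $P_d\in\bm{P}_d$ that produces it via the factorization in~\eqref{eq:data_multipliers}. The crucial step is then the identity
\begin{align*}
\begin{bmatrix}-Z^\top&M^\top\\0&B_d^\top\end{bmatrix}\begin{bmatrix}\tDe^\top\\I\end{bmatrix}
=\begin{bmatrix}M^\top-Z^\top\tDe^\top\\B_d^\top\end{bmatrix}
=\begin{bmatrix}D^\top B_d^\top\\B_d^\top\end{bmatrix}
=\begin{bmatrix}D^\top\\I\end{bmatrix}B_d^\top,
\end{align*}
where the middle equality uses $M-\tDe Z=B_dD$.

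Plugging this into the quadratic form and factoring out the trailing $B_d^\top$ (and $B_d$ on the transposed side), I would obtain
\begin{align*}
\begin{bmatrix}\tDe^\top\\I\end{bmatrix}^\top\tilde{P}_0\begin{bmatrix}\tDe^\top\\I\end{bmatrix}
=B_d\begin{bmatrix}D^\top\\I\end{bmatrix}^\top\! P_d\begin{bmatrix}D^\top\\I\end{bmatrix}B_d^\top.
\end{align*}
Since $D\in\bm{D}$ and $P_d\in\bm{P}_d$, the inner expression is positive semidefinite by Assumption~\ref{ass:noise_multipliers}, and congruence with $B_d^\top$ preserves this. As $\tilde{P}_0\in\bm{\tilde{P}}_0$ was arbitrary, we conclude $\tDe\in\tDel$, which yields $B_w\Del\subseteq\tDel$.

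I do not anticipate a real obstacle here: unlike Lemma~\ref{lem:prior_bounds}, where one of the inclusions required the full-column-rank argument together with the negative-definiteness of the left-upper block of some $P_j$, the present lemma asserts only a one-sided inclusion, so no structural recovery of a pre-image is needed. The only point requiring attention is making sure the congruence factorization through $B_d^\top$ is written cleanly; this is what makes $\bm{\tilde{P}}_0$ the ``right'' multiplier class to define in~\eqref{eq:data_multipliers}.
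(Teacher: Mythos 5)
Your proposal is correct and follows essentially the same route as the paper: the paper's own argument is precisely the derivation preceding the lemma, namely substituting the data relation $M-\tDe Z=B_dD$ into the quadratic form defined by $\bm{\tilde{P}}_0$ and refactoring it as a congruence of $\begin{bmatrix}D^\top\\I\end{bmatrix}^\top P_d\begin{bmatrix}D^\top\\I\end{bmatrix}\succeq0$ with $B_d^\top$. The only cosmetic difference is that you bypass the intermediate set $\bm{\tilde{D}}=B_d\bm{D}$ and its multiplier characterization~\eqref{eq:tilde_D_def}, and you correctly observe that the negative-definiteness and rank conditions are not needed for this one-sided inclusion (the paper invokes them only for the set equality in~\eqref{eq:tilde_D_def}, which is used later in Lemma~\ref{lem:combined_multipliers}).
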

The proof of Lemma~\ref{lem:data_bound} is provided in Appendix~\ref{app:B}.
Lemma~\ref{lem:data_bound} states that $\bm{\tilde{P}}_0$ is a valid class of multipliers for all transformed uncertainties consistent with the data.
Since $\Det\in\Del$, this also implies $B_w\Det\in\tDel$, i.e., we obtain a constraint on the unknown true uncertainty $\Det$ without explicitly identifying the underlying system.
Computing the multipliers $\bm{\tilde{P}}_0$ requires knowledge of the data matrices $X$, $X_+$ and $U$ 
the prior model parts $A$, $B$, $B_d$, $C_z$, $D_{z}$, as well as the disturbance description $\bm{D}$.
Note that the set $\tDel$ only provides an \emph{exact} parametrization of all $\tilde{\Delta}$ satisfying
\begin{align*}
    M=B_dD+\tilde{\Delta}Z\>\>\text{holds for some}\>\>D\in\bm{D},
\end{align*}
but not for the set $B_w\Del$ (since $B_w\Del\supseteq\tDel$ does not hold in general).
This is due to the fact that $\tDel$ itself does not take into account the structural information that $\tilde{\Delta}$ is of the form $B_w\Delta$, which we include in our framework via the prior multipliers derived in Lemma~\ref{lem:prior_bounds}.

\subsection{Combined multipliers}\label{subsec:param_combined}

Recall that, by definition, $\tilde{\Delta}\in\tDep$ and $\tilde{\Delta}\in\tDel$ iff
\begin{align}\label{eq:combined_multipliers_prior_and_data}
    \begin{bmatrix}\tilde{\Delta}^\top\\I\end{bmatrix}^\top\tilde{P}
    \begin{bmatrix}\tilde{\Delta}^\top\\I\end{bmatrix}\succeq0\>\>\text{and}\>\>
    \begin{bmatrix}\tilde{\Delta}^\top\\I\end{bmatrix}^\top\tilde{P}_0
    \begin{bmatrix}\tilde{\Delta}^\top\\I\end{bmatrix}\succeq0
\end{align}
for all $\tilde{P}\in\bm{\tilde{P}}$, $\tilde{P}_0\in\bm{\tilde{P}}_0$, where $\bm{\tilde{P}}$ and $\bm{\tilde{P}}_0$ are the multiplier classes resulting from prior knowledge and data, respectively.
Hence, for any $\tilde{\Delta}\in\tDep\cap\tDel$, we infer
\begin{align}\label{eq:combined_multipliers_prior_and_data2}
        \begin{bmatrix}\tilde{\Delta}^\top\\I\end{bmatrix}^\top\!(\tilde{P}+\tilde{P}_0)
    \begin{bmatrix}\tilde{\Delta}^\top\\I\end{bmatrix}\succeq0\>\>
    \text{for all}\>\>
    \tilde{P}\in\bm{\tilde{P}},\>\tilde{P}_0\in\bm{\tilde{P}}_0.
\end{align}
Conversely, since $0\!\in\!\bm{\tilde{P}}$, $0\!\in\!\bm{\tilde{P}}_0$, any $\tilde{\Delta}$ with~\eqref{eq:combined_multipliers_prior_and_data2} also satisfies \eqref{eq:combined_multipliers_prior_and_data} for all $\tilde{P}\in\bm{\tilde{P}}$, $\tilde{P}_0\in\bm{\tilde{P}}_0$.
For the combined multiplier class $\bm{\tilde{P}}_{\mathrm{com}}\coloneqq\bm{\tilde{P}}+\bm{\tilde{P}}_0$
and the transformed, combined uncertainty set $\tDecom\coloneqq\tDep\cap\tDel$, this shows
\begin{align}\label{eq:tDecom_def}
    \tDecom=\left\{\tilde{\Delta}\Bigm|
    \begin{bmatrix}\tilde{\Delta}^\top\\I\end{bmatrix}^\top\!\!\!
    \tilde{P}_{\mathrm{com}}
    \begin{bmatrix}\tilde{\Delta}^\top\\I\end{bmatrix}\succeq0\>\>\forall\tilde{P}_{\mathrm{com}}\in\bm{\tilde{P}}_{\mathrm{com}}\right\}.
\end{align}
%
As explained in Sections~\ref{subsec:param_prior} and~\ref{subsec:param_data}, the (transformed) true uncertainty $\tDet$ satisfies $\tDet\in\tDep$ as well as $\tDet\in\tDel$ such that $\tDet\in\tDecom$.
In fact, the set $\tDecom$ combines all available information on $\tDet$ in an exact fashion.

\begin{lemma}\label{lem:combined_multipliers}
Under Assumptions~\ref{ass:prior_multipliers} and~\ref{ass:noise_multipliers}, $\tDecom=B_w\Decom$.
\end{lemma}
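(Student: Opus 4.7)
The plan is to establish the two inclusions $B_w\Decom\subseteq\tDecom$ and $\tDecom\subseteq B_w\Decom$ separately. Both directions rest on the already-proved Lemmas~\ref{lem:prior_bounds} and~\ref{lem:data_bound}, but the reverse inclusion additionally requires exploiting the \emph{equality} hidden in the derivation of $\tDel$ in Section~\ref{subsec:param_data}, which is the main conceptual obstacle.

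For the inclusion $B_w\Decom\subseteq\tDecom$, I would pick any $\Delta\in\Decom=\Dep\cap\Del$ and assemble the two lemmas: by Lemma~\ref{lem:prior_bounds}, $B_w\Delta\in B_w\Dep=\tDep$, and by Lemma~\ref{lem:data_bound}, $B_w\Delta\in B_w\Del\subseteq\tDel$. Hence $B_w\Delta\in\tDep\cap\tDel=\tDecom$. This is essentially a one-line combination of the two lemmas.

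For the reverse inclusion $\tDecom\subseteq B_w\Decom$, I fix $\tilde{\Delta}\in\tDecom$. Since $\tilde{\Delta}\in\tDep=B_w\Dep$ by Lemma~\ref{lem:prior_bounds}, I obtain some $\Delta\in\Dep$ with $\tilde{\Delta}=B_w\Delta$, and it remains to show that this very $\Delta$ also lies in $\Del$. To this end, I would substitute the defining expression of $\bm{\tilde{P}}_0$ from~\eqref{eq:data_multipliers} into the quadratic constraint defining $\tDel$; after factoring out the middle block $\begin{bmatrix}I & 0\\0 & B_d^\top\end{bmatrix}$ exactly as in the derivation preceding~\eqref{eq:data_multipliers}, the constraint collapses to $\begin{bmatrix}(M-\tilde{\Delta} Z)^\top\\I\end{bmatrix}^\top \tilde{P}_d\begin{bmatrix}(M-\tilde{\Delta} Z)^\top\\I\end{bmatrix}\succeq0$ for all $\tilde{P}_d\in\bm{\tilde{P}}_d$. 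By the characterization~\eqref{eq:tilde_D_def} of $\bm{\tilde{D}}=B_d\bm{D}$, this says $M-\tilde{\Delta} Z\in B_d\bm{D}$, so there exists $D\in\bm{D}$ with $M=B_w\Delta Z+B_d D$. Comparing with the definition~\eqref{eq:Sigma_Delta_def} yields $\Delta\in\Del$, and therefore $\tilde{\Delta}=B_w\Delta\in B_w\Decom$.

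The crux of the argument, and where I expect the main obstacle to lie, is justifying that the identity $\tDel=\{\tilde{\Delta}\mid M-\tilde{\Delta} Z\in B_d\bm{D}\}$ is actually an equality rather than just the inclusion $\supseteq$ provided by Lemma~\ref{lem:data_bound}. The nontrivial direction uses the equivalent multiplier characterization~\eqref{eq:tilde_D_def} of $\bm{\tilde{D}}$; this mirrors the structure of the proof of Lemma~\ref{lem:prior_bounds} (applied here with $B_d$ in place of $B_j$ and $\bm{P}_d$ in place of $\bm{P}_j$) and depends on both the full column rank of $B_d$ and the existence, asserted in Assumption~\ref{ass:noise_multipliers}, of a multiplier with negative definite left-upper block. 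Once this equivalence is acknowledged, no further computations are needed and the two inclusions close the proof.
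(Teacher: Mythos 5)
Your proposal is correct and follows essentially the same route as the paper's proof: the forward inclusion by combining Lemmas~\ref{lem:prior_bounds} and~\ref{lem:data_bound}, and the reverse inclusion by extracting $\Delta\in\Dep$ with $\tilde{\Delta}=B_w\Delta$ via Lemma~\ref{lem:prior_bounds} and then using the structure of $\bm{\tilde{P}}_0$ together with the characterization~\eqref{eq:tilde_D_def} of $\bm{\tilde{D}}=B_d\bm{D}$ to conclude $\Delta\in\Del$. Your identification of the crux --- that~\eqref{eq:tilde_D_def} is an exact equality proved analogously to Lemma~\ref{lem:prior_bounds} using the full column rank of $B_d$ and the definiteness condition in Assumption~\ref{ass:noise_multipliers} --- matches precisely what the paper invokes (without re-proving) at that step.
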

The proof of Lemma~\ref{lem:combined_multipliers} is provided in Appendix~\ref{app:C}.
By Lemma~\ref{lem:combined_multipliers}, $\tDecom$ contains all available information on $\tDet$, including the fact that it is of the form $B_w\Delta$ for some $\Delta\in\Decom$.
This structure is inherited from the prior description $\tDep\supseteq\tDecom$ due to Lemma~\ref{lem:prior_bounds} and is not captured by the learnt multipliers $\bm{\tilde{P}}_0$ alone.
Therefore, using multipliers $\bm{P}_j$ for prior knowledge as in Assumption~\ref{ass:prior_multipliers} is \emph{always} beneficial in the presence of structure, even if these multipliers are derived from very conservative bounds on $\Delta_j$.

\begin{example}\label{ex:data2}
Let us illustrate our findings via the purely data-driven special case with $\Delta_{\tr}=\begin{bmatrix}A_{\tr}&B_{\tr}\end{bmatrix}$ and $\Dep=\mathbb{R}^{n\times(n+m)}$, cf.\ Example~\ref{ex:data1}.
Suppose the disturbance generating the data satisfies $D_{\tr}D_{\tr}^\top\preceq \bar{d}I$ for some $\bar{d}>0$, i.e., Assumption~\ref{ass:noise_multipliers} holds with $\bm{P}_d=\{\lambda\cdot\mathrm{diag}(-I,\bar{d}I)\mid\lambda>0\}$.
Lemma~\ref{lem:prior_bounds} is trivial.
Further, Lemma~\ref{lem:data_bound} implies that any $\Delta=\begin{bmatrix}A&B\end{bmatrix}\in\Del$ consistent with the data satisfies
\begin{align*}
(X_+-AX-BU)(X_+-AX-BU)^\top\preceq \bar{d}B_dB_d^\top.
\end{align*}
The latter inequality means that $\tDel$, which equals $\Del=\tDecom=\Decom$ due to $\Dep=\mathbb{R}^{n\times(n+m)}$ and $B_w=I$, contains all matrices $\begin{bmatrix}A&B\end{bmatrix}$ for which the violation of the nominal system dynamics is bounded by the noise level.
For this special case,~\cite{waarde2022from} uses a parametrization analogous to Lemma~\ref{lem:data_bound} for data-driven control with stability and $\mathcal{H}_2$- or $\mathcal{H}_{\infty}$-performance objectives.
As a key advantage of this approach in comparison to earlier works such as~\cite{persis2020formulas,persis2021low,berberich2020design,waarde2020informativity}, the resulting design is less conservative and computationally more efficient since the number of decision variables is independent of the data length.
In case that no prior knowledge is available and if we use simple disturbance multipliers as above, Lemma~\ref{lem:data_bound} as well as the results in Section~\ref{subsec:design_H2} reduce to~\cite{waarde2022from}.
However, our framework provides the following generalizations and improvements over~\cite{waarde2022from}:
1) The possibility to include prior knowledge in terms of uncertainty structure and/or bounds (Section~\ref{subsec:prior_uncertainty});
2) the use of more flexible disturbance multipliers (Section~\ref{subsec:prior_disturbance});
3) the design of controllers based on input-output data (Section~\ref{subsec:design_output});
4) and the inclusion of nonlinear uncertainties (Section~\ref{sec:nonlinear}).
\end{example}



\section{Robust controller design using prior knowledge and data}\label{sec:design}

In this section, we use the representation of $\tDecom$ in~\eqref{eq:tDecom_def} to design controllers with guaranteed robust $\mathcal{H}_2$-performance of the channel $d\mapsto e$ for all $\Delta\in\Decom$ (Section~\ref{subsec:design_H2}).
Further, we show how our approach can be applied for output-feedback controller design based on input-output data in Section~\ref{subsec:design_output}.

\subsection{Robust $\mathcal{H}_2$-performance}\label{subsec:design_H2}

The $\mathcal{H}_2$-norm of an LTI system can be defined based on its frequency response, see~\cite[Section 3.3.3]{scherer2000linear}.
As a deterministic interpretation, the squared $\mathcal{H}_2$-norm is equal to the sum of the output energies of the system responses $e$ when applying impulsive inputs to the system in each component of $d$.
Furthermore, when choosing $C_e=\begin{bmatrix}\bar{C}^\top&0\end{bmatrix}^\top$, $D_{eu}=\begin{bmatrix}0&\bar{D}^\top\end{bmatrix}^\top$ for some $\bar{C}$, $\bar{D}$, minimizing the $\mathcal{H}_2$-norm of~\eqref{eq:sys} is equivalent to a linear-quadratic regulation problem with weighting matrices $Q=\bar{C}^\top\bar{C}$ and $R=\bar{D}^\top\bar{D}$ for the state and input, respectively.
The following result provides a design procedure in order to guarantee a closed-loop $\mathcal{H}_2$-performance bound for the channel $d\mapsto e$ based on the uncertainty parametrization~\eqref{eq:tDecom_def}.

\begin{theorem}\label{thm:robust_H2}
Suppose Assumptions~\ref{ass:prior_multipliers} and~\ref{ass:noise_multipliers} hold, $D_{ed}=0$, and there exist ${\mathcal{X}\succ0}$, $K$, $\tilde{P}_{\mathrm{com}}\in\bm{\tilde{P}}_{\mathrm{com}}$, $\gamma>0$
such that
\begin{align}\label{eq:prop_rob_H2_LMI1}
&\tr\big((C_e+D_{eu}K)\mathcal{X}(C_e+D_{eu}K)^\top\big)<\gamma^2
\end{align}
and~\eqref{eq:prop_rob_H2_LMI2} hold.
Then, for any $\Delta\in\Decom$, the system~\eqref{eq:sys_prior} controlled as $u_k=Kx_k$ is stable and $d\mapsto e$ has a closed-loop $\mathcal{H}_2$-norm less than $\gamma$.

\begin{figure*}
\vspace{0pt}
\begin{align}\label{eq:prop_rob_H2_LMI2}
\begin{bmatrix}I&0\\(A+BK)^\top&(C_z+D_zK)^\top\\\hline
0&I\\I&0\end{bmatrix}^\top
\left[
\begin{array}{c|c}
\begin{matrix}B_dB_d^\top-\mathcal{X}&0\\0&\mathcal{X}\end{matrix}
&\begin{matrix}0&0\\0&0\end{matrix}\\\hline
\begin{matrix}\qquad0\qquad\,\,&0\\\qquad0\qquad\,\,&0\end{matrix}&
\tilde{P}_{\mathrm{com}}\end{array}
\right]
\begin{bmatrix}I&0\\(A+BK)^\top&(C_z+D_zK)^\top\\\hline
0&I\\I&0\end{bmatrix}\prec0
\end{align}
\noindent\makebox[\linewidth]{\rule{\textwidth}{0.4pt}}
\end{figure*}
\end{theorem}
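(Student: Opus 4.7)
The plan is to apply the full-block S-procedure to the LMI~\eqref{eq:prop_rob_H2_LMI2} by congruence with a matrix encoding $\tilde{\Delta}^\top$, thereby reducing~\eqref{eq:prop_rob_H2_LMI2} to a standard Lyapunov/Gramian inequality for the closed loop, and then to combine that with the trace inequality~\eqref{eq:prop_rob_H2_LMI1} to conclude the claimed $\mathcal{H}_2$-norm bound.

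First, I would observe that under $u_k = Kx_k$ the closed-loop dynamics of~\eqref{eq:sys_prior} read $x_{k+1} = A_{cl} x_k + B_d d_k$ and $e_k = C_{cl} x_k$ (using $D_{ed}=0$), where $A_{cl} \coloneqq A+BK+\tilde{\Delta}(C_z+D_zK)$, $C_{cl} \coloneqq C_e+D_{eu}K$, and $\tilde{\Delta} \coloneqq B_w \Delta$. For any $\Delta\in\Decom$, Lemma~\ref{lem:combined_multipliers} gives $\tilde{\Delta} \in \tDecom$, so by~\eqref{eq:tDecom_def} the multiplier inequality
$$\begin{bmatrix}\tilde{\Delta}^\top\\I\end{bmatrix}^\top \tilde{P}_{\mathrm{com}} \begin{bmatrix}\tilde{\Delta}^\top\\I\end{bmatrix} \succeq 0$$
holds for the $\tilde{P}_{\mathrm{com}} \in \bm{\tilde{P}}_{\mathrm{com}}$ from the hypotheses.

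Next, I would right-multiply~\eqref{eq:prop_rob_H2_LMI2} by $\begin{bmatrix}I\\\tilde{\Delta}^\top\end{bmatrix}$ and left-multiply by its transpose. The upper block of the outer factor collapses to $\begin{bmatrix}I\\A_{cl}^\top\end{bmatrix}$ while the lower block becomes $\begin{bmatrix}\tilde{\Delta}^\top\\I\end{bmatrix}$, so~\eqref{eq:prop_rob_H2_LMI2} reduces to
$$A_{cl}\mathcal{X}A_{cl}^\top + B_d B_d^\top - \mathcal{X} + \begin{bmatrix}\tilde{\Delta}^\top\\I\end{bmatrix}^\top \tilde{P}_{\mathrm{com}}\begin{bmatrix}\tilde{\Delta}^\top\\I\end{bmatrix} \prec 0.$$
Discarding the positive-semidefinite multiplier term preserves strictness and yields the Lyapunov inequality $A_{cl}\mathcal{X}A_{cl}^\top + B_d B_d^\top - \mathcal{X} \prec 0$, from which Schur stability of $A_{cl}$ and the Gramian comparison $\mathcal{X} \succeq P_{\mathrm{c}} \coloneqq \sum_{k=0}^\infty A_{cl}^k B_d B_d^\top (A_{cl}^\top)^k$ are standard consequences.

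Finally, the squared closed-loop $\mathcal{H}_2$-norm from $d$ to $e$ equals $\tr(C_{cl} P_{\mathrm{c}} C_{cl}^\top)$, and trace monotonicity together with $\mathcal{X} \succeq P_{\mathrm{c}}$ and~\eqref{eq:prop_rob_H2_LMI1} give $\tr(C_{cl} P_{\mathrm{c}} C_{cl}^\top) \leq \tr(C_{cl}\mathcal{X}C_{cl}^\top) < \gamma^2$. The whole argument is the standard full-block S-procedure recipe; the conceptual work has been carried out upstream, namely in the construction of the combined multiplier class $\bm{\tilde{P}}_{\mathrm{com}}$ and the identification $\tDecom = B_w\Decom$ in Lemma~\ref{lem:combined_multipliers}, which is precisely what licenses discharging the S-procedure term with the correct sign. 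I do not anticipate any significant obstacle; the only points requiring care are the outer-factor bookkeeping in the congruence transformation and the observation that the (possibly non-strict) multiplier inequality remains compatible with the strict LMI.
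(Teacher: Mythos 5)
Your proposal is correct and follows essentially the same route as the paper's proof: the paper obtains exactly your robust Lyapunov inequality $\mathcal{A}(\tilde{\Delta})\mathcal{X}\mathcal{A}(\tilde{\Delta})^\top-\mathcal{X}+B_dB_d^\top\prec0$ for all $\tilde{\Delta}\in\tDecom$ by citing a full-block S-procedure result, and then concludes stability and the $\mathcal{H}_2$-bound via a cited Gramian/trace characterization, before transferring to $\Delta\in\Decom$ through $B_w\Decom\subseteq\tDecom$. You merely make the two cited steps explicit (the congruence with $\begin{bmatrix}I\\ \tilde{\Delta}^\top\end{bmatrix}$ and the standard controllability-Gramian argument), which is fine.
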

\begin{proof}
It follows from~\cite[Theorem 10.2]{scherer2000robust} that~\eqref{eq:prop_rob_H2_LMI2} implies
\begin{align}\label{eq:prop_rob_H2_proof1}
\mathcal{A}(\tilde{\Delta})\mathcal{X}\mathcal{A}(\tilde{\Delta})^\top-\mathcal{X}+B_pB_p^\top&\prec0
\end{align}
for all $\tilde{\Delta}\in\tDecom$, where $\mathcal{A}(\tilde{\Delta})\coloneqq A+BK+\tilde{\Delta}(C_z+D_zK)$, compare~\eqref{eq:tDecom_def}.
Using~\cite[Proposition 3.13]{scherer2000linear},~\eqref{eq:prop_rob_H2_LMI1} and~\eqref{eq:prop_rob_H2_proof1} together with $D_{ed}=0$ imply that~\eqref{eq:sys_prior_tilde} is stable and has a closed-loop $\mathcal{H}_2$-norm less than $\gamma$ for all $\tilde{\Delta}\in\tDecom$.
Since $B_w\Decom\subseteq\tDecom$, the same stability and performance properties also hold for~\eqref{eq:sys_prior} and any $\Delta\in\Decom$, which concludes the proof.
\end{proof}
Recall that the multiplier class $\bm{\tilde{P}}_{\mathrm{com}}$ is taken as $\bm{\tilde{P}}_{\mathrm{com}}=\bm{\tilde{P}}+\bm{\tilde{P}}_0$ with $\bm{\tilde{P}}$ and $\bm{\tilde{P}}_0$ as in~\eqref{eq:prior_multiplier_transformed} and~\eqref{eq:data_multipliers}, respectively, compare Section~\ref{subsec:param_combined} for details.
If the LFT~\eqref{eq:sys_prior} is interpreted as a partially known system with an unknown parameter $\Delta_{\tr}$, compare~\eqref{eq:sys}, we note that any robust controller designed via Theorem~\ref{thm:robust_H2} leads to the same guarantees for the \emph{true} unknown system.
In the proof, we show robust performance w.r.t.\ the transformed uncertainty $\tilde{\Delta}=B_w\Delta$ since the multiplier description in~\eqref{eq:tDecom_def} is only valid for $\tilde{\Delta}$, but not for $\Delta$ directly.
However, treating $B_w\Delta$ as one full-block uncertainty does not cause additional conservatism in the design since, by Lemma~\ref{lem:combined_multipliers}, the set $\tDecom$ takes into account the fact that $\tilde{\Delta}$ has the form $B_w\Delta$ for some $\Delta\in\Decom$.

While the matrix inequality~\eqref{eq:prop_rob_H2_LMI2} is not linear in the state-feedback gain $K$, it is simple to transform~\eqref{eq:prop_rob_H2_LMI2} into an LMI following standard steps (compare, e.g.,~\cite[Section 4.5]{scherer2000linear}).
By just applying the Schur complement and defining the new variable $L=K\mathcal{X}$,~\eqref{eq:prop_rob_H2_LMI2} is equivalent to
\begin{align*}
\begin{bmatrix}\begin{bmatrix}B_dB_d^\top-\mathcal{X}&0\\0&0\end{bmatrix}+
\begin{bmatrix}0&I\\I&0\end{bmatrix}^\top \tilde{P}_{\mathrm{com}}\begin{bmatrix}0&I\\I&0\end{bmatrix}&
{\text{\large$\star$}}\\\\
\begin{bmatrix}\mathcal{X}A^\top+L^\top B^\top&\mathcal{X}C_z^\top+L^\top D_z^\top\end{bmatrix}&-\mathcal{X}
\end{bmatrix}\prec0,
\end{align*}
which is an LMI in the variables $\mathcal{X}$ and $L$.
Similarly,~\eqref{eq:prop_rob_H2_LMI1} is equivalent to the existence of $\Gamma$ such that $\mathrm{tr}(\Gamma)<\gamma^2$ and
\begin{align*}
    \begin{bmatrix}\Gamma&C_e\mathcal{X}+D_{eu}L\\(C_e\mathcal{X}+D_{eu}L)^\top&\mathcal{X}\end{bmatrix}\succ0.
\end{align*}
While the above result focuses on $\mathcal{H}_2$-performance, Theorem~\ref{thm:robust_H2} also allows us to design controllers which robustly stabilize~\eqref{eq:sys_prior} for all $\Delta\in\Decom$ by simply omitting the constraint~\eqref{eq:prop_rob_H2_LMI1} and setting $B_d=0$ in~\eqref{eq:prop_rob_H2_LMI2}.
Designing for robust quadratic performance is possible as well and will be addressed in a more general problem setting with nonlinear uncertainties in Section~\ref{sec:nonlinear}.
Finally, the presented results can be trivially extended to the case that performance is ensured for some channel $d'\mapsto e$ with a \emph{different} disturbance input $d'$, not necessarily equal to the disturbance $d$ generating the data.

Theorem~\ref{thm:robust_H2} has multiple advantages if compared to a sequential approach using system identification and model-based robust control.
The synthesis conditions rely on the learnt multipliers provided by Lemma~\ref{lem:data_bound} which are tight and computationally attractive.
In particular, they only use noisy data of finite length and do not require an additional estimation procedure.
On the other hand, providing similar error bounds using system identification is in general difficult and an active field of research~\cite{dean2019sample,matni2019self,matni2019tutorial}.
If compared to methods based on set membership estimation, see~\cite{belforte1990parameter,milanese1991optimal}, our approach is more flexible due to the generality of the multiplier classes in Assumptions~\ref{ass:prior_multipliers} and~\ref{ass:noise_multipliers} (compare Section~\ref{sec:prior}), and
more direct, combining all available knowledge in the controller design without any step of precomputation.

\subsection{Robust output-feedback design}\label{subsec:design_output}
In this section, we illustrate how our approach can be used to design robust output-feedback controllers based on noisy input-output data.
As we will see, data-driven output-feedback design leads to a structured robust control problem which fits naturally into our framework.
We consider systems of the form
\begin{align}\label{eq:sys_IO}
y_{k}=&A_1y_{k-1}+\dots+A_{n}y_{k-n}\\\nonumber
&+B_0u_k+\dots+B_{n}u_{k-n}+B_d^0d_k,
\end{align}
where $y_k\in\mathbb{R}^p$ is the output, $u_k\in\mathbb{R}^m$ is the input, $d_k\in\mathbb{R}^{n_d}$ is a disturbance, $n$ is the system order, and all matrices except for $B_d^0$ are unknown.
\begin{figure*}
\vspace{0pt}
\begin{align}\label{eq:sys_IO_state_space}
\left[\begin{array}{c}
u_{k-n+1}\\\vdots\\u_{k-1}\\u_k\\\hline y_{k-n+1}\\\vdots\\y_{k-1}\\y_k
\end{array}\right]
=
\left[\begin{array}{cccc|cccc}
0&I&\dots&0&0&\dots&\dots&0\\
\vdots&\ddots&\ddots&\vdots&\vdots&\ddots&\ddots&\vdots\\
0&\ddots&\ddots&I&\vdots&\ddots&\ddots&\vdots\\
0&\dots&\dots&0&0&\dots&\dots&0
\\\hline
0&\dots&\dots&0&0&I&\dots&0\\
\vdots&\ddots&\ddots&\vdots&\vdots&\ddots&\ddots&\vdots\\
0&\dots&\dots&0&0&\dots&\dots&I\\
B_n&\dots&\dots&B_1&
A_n&\dots&\dots&A_1
\end{array}\right]
\left[\begin{array}{c}
u_{k-n}\\\vdots\\u_{k-2}\\u_{k-1}\\\hline y_{k-n}\\\vdots\\y_{k-2}\\y_{k-1}
\end{array}\right]
+
\left[\begin{array}{c}
0\\\vdots\\0\\I\\\hline 0\\\vdots\\0\\B_0
\end{array}\right]
u_k
+
\left[\begin{array}{c}
0\\\vdots\\0\\0\\\hline 0\\\vdots\\0\\B_d^0
\end{array}\right]
d_k
\end{align}
\noindent\makebox[\linewidth]{\rule{\textwidth}{0.4pt}}
\end{figure*}
\begin{subequations}\label{eq:sys_IO_LFT}
It is straightforward to see (compare, e.g.,~\cite[Lemma 3.4.7]{goodwin2014adaptive}, \cite{koch2022provably,persis2020formulas}) that~\eqref{eq:sys_IO} can be written equivalently as the state-space system~\eqref{eq:sys_IO_state_space} with the extended state $\xi_k=\begin{bmatrix}u_{k-n}^\top&\dots&u_{k-1}^\top&y_{k-n}^\top&\dots&y_{k-1}^\top\end{bmatrix}^\top$.
Using that~\eqref{eq:sys_IO_state_space} contains both known components as well as unknown parameters in the last row, it can be written as the LFT
\begin{align}
\xi_{k+1}&=A\xi_k+Bu_k+B_ww_k+B_dd_k,\\
z_k&=\begin{bmatrix}I\\0\end{bmatrix}\xi_k+\begin{bmatrix}0\\I\end{bmatrix}u_k,\\
w_k&=\Delta_{\tr} z_k,
\end{align}
\end{subequations}
where $A$, $B$, $B_w$, $B_d$ are suitably defined \emph{known} matrices and $\Delta_{\tr}=\begin{bmatrix}B_n&\dots&B_1&A_n&\dots&A_1&B_0\end{bmatrix}$ plays the role of the uncertainty.
Clearly,~\eqref{eq:sys_IO_LFT} is of the form~\eqref{eq:sys} such that all results in this paper are applicable to the system~\eqref{eq:sys_IO_LFT}.
More precisely, suppose that measurements of the input $\{u_k\}_{k=0}^{N-1}$ and the extended state $\{\xi_k\}_{k=0}^{N}$ are available, corresponding to input-output measurements of~\eqref{eq:sys_IO}, and affected by an unknown noise sequence $\{\hat{d}_k\}_{k=0}^{N-1}$ satisfying a known description $D_{\mathrm{tr}}\in\bm{D}$ with $\bm{D}$ as in Assumption~\ref{ass:noise_multipliers}.
Then, the proposed framework can be utilized to design robust state-feedback controllers $u_k=K\xi_k$ for~\eqref{eq:sys_IO_LFT} based on the measured data.
Due to the above definition of $\xi_k$, the resulting controllers correspond to dynamic \emph{output-feedback} controllers of the form
\begin{align*}
u_k=K_1^uu_{k-1}+\dots+K_n^uu_{k-n}+K_1^yy_{k-1}+\dots+K_n^yy_{k-n}
\end{align*}
with $K=\begin{bmatrix}K_n^u&\dots&K_1^u&K_n^y&\dots&K_1^y\end{bmatrix}$ for~\eqref{eq:sys_IO}.
As in the previous sections, one can also include various forms of additional prior knowledge to reduce conservatism.

It is important to point out that this approach is not always applicable.
If no prior knowledge is available, then it requires that either $p=1$ or, if $p>1$, a certain condition involving the system's \emph{lag} $l$ holds (i.e., $pl=n$), compare~\cite[Lemma 13]{berberich2021on} for details.
It is an interesting issue for future research to overcome this limitation for robust output-feedback design within the present framework.
Finally, we note that a similar approach to data-driven output-feedback which is also based on an extended state vector but does not exploit the structure in~\eqref{eq:sys_IO_state_space} is suggested in~\cite{persis2020formulas}.


\section{Prior knowledge descriptions}\label{sec:prior}
In this section, we showcase the flexibility of the proposed framework by discussing various types of prior knowledge that can be considered.
In Sections~\ref{subsec:prior_uncertainty} and \ref{subsec:prior_disturbance},
we address prior uncertainty descriptions $\Dep$ (Assumption~\ref{ass:prior_multipliers})
and disturbance descriptions $\bm{D}$ (Assumption~\ref{ass:noise_multipliers}), respectively.

\subsection{Prior knowledge on the uncertainty (Assumption~\ref{ass:prior_multipliers})}\label{subsec:prior_uncertainty}
Suppose $\Delta=\mathrm{diag}_{j=1}^{\ell}(\Delta_j)$  and
$\Delta_j$ for $j=1,\dots,\ell$ satisfy
\begin{align}\label{eq:prior_bounds}
\begin{bmatrix}\Delta_j^\top\\I\end{bmatrix}^\top H_j\begin{bmatrix}\Delta_j^\top\\I\end{bmatrix}\succeq0\quad\text{or}\quad\begin{bmatrix}\delta_j\\1\end{bmatrix}^\top h_j\begin{bmatrix}\delta_j\\1\end{bmatrix}\geq0
\end{align}
with some $H_j\in\mathbb{R}^{(n_{w,i}+n_{z,i})\times(n_{w,i}+n_{z,i})}$ or $h_j\in\mathbb{R}^{2\times 2}$ if $\Delta_j$ is a full block or a repeated scalar block $\Delta_j=\delta_jI_{n_{w,i}}$, respectively.
Then, $\bm{P}_j$ in Assumption~\ref{ass:prior_multipliers} can be chosen as
\begin{align}\label{eq:prior_bounds_multipliers}
\bm{P}_{\mathrm{full}}&=\{P_j\mid P_j=\lambda H_j,\>\lambda\geq0\}\\\nonumber
\text{or}\>\>\bm{P}_{\mathrm{rep}}&=\{P_j\mid P_j=h_j\otimes \Lambda,\>0\preceq\Lambda\in\mathbb{R}^{n_{w,j}\times n_{w,j}}\},
\end{align}
respectively.
Important special cases are norm bounds imposed as $\Delta_j\Delta_j^\top\preceq\bar{\delta}I$ or $\delta_j^2\leq\bar{\delta}$ with some $\bar{\delta}>0$, which are captured by~\eqref{eq:prior_bounds} with
\begin{align}\label{eq:prior_bounds_Hj_hj}
    H_j=\begin{bmatrix}-I&0\\0&\bar{\delta} I\end{bmatrix}\quad\text{or}\quad
    h_j=\begin{bmatrix}-1&0\\0&\bar{\delta}\end{bmatrix}.
\end{align}
When considering $H_j$, $h_j$ in~\eqref{eq:prior_bounds_Hj_hj}, the multiplier class $\bm{P}_{\mathrm{rep}}$ for repeated scalar uncertainties is \emph{larger} than the class $\bm{P}_{\mathrm{full}}$, i.e., $\bm{P}_{\mathrm{full}}\subseteq\bm{P}_{\mathrm{rep}}$.
Generally speaking, a larger set of multipliers $\bm{P}_j$ shrinks the corresponding set of uncertainties $\pmb{\Delta}_j$, which is, typically, beneficial for robust controller design, at the price of increasing the computational complexity.
We note that~\eqref{eq:prior_set_Delta} also allows for multiplier descriptions that are more flexible than~\eqref{eq:prior_bounds_multipliers} such as convex hull multipliers (see, e.g.,~\cite[Section 2.2]{scherer2005relaxations}) in case $\Delta_j$ lies in the convex hull of a given set of generators. 
Let us include the following observation.

\begin{proposition}\label{prop:diagonal_delta}
If $\Delta_j\in\bm{\Delta}_j$ holds with $\bm{P}_j=\bm{P}_{\mathrm{rep}}$ for some $h_j=\begin{bmatrix}h_{11}&h_{12}\\h_{12}&h_{22}\end{bmatrix}$ with $h_{11}<0$, then $\Delta_j$ is diagonally repeated, i.e., it is of the form $\Delta_j=\delta_j I$ for some $\delta_j\in\mathbb{R}$.
\end{proposition}
\begin{proof}
We first show that $\Delta_j$ is diagonal.
Plugging $P_j=h_j\otimes\Lambda$ with $\Lambda=e_ke_k^\top\succeq0$
and the $k$-th unit vector $e_k$ into
\begin{align}\label{eq:prop_diagonal_delta_proof1}
    \begin{bmatrix}\Delta_j^\top\\I\end{bmatrix}^\top P_j
    \begin{bmatrix}\Delta_j^\top\\I\end{bmatrix}\succeq0,
\end{align}
we obtain
\begin{align}\label{eq:prop_diagonal_delta_proof2}
h_{11}\Delta_je_ke_k^\top\Delta_j^\top
+h_{12}(\Delta_je_ke_k^\top+e_ke_k^\top\Delta_j^\top)
+h_{22}e_ke_k^\top\succeq0
\end{align}
for $k=1,\dots,n_{w,j}$.
For any $i$ and $k$ with $i\neq k$, multiplying~\eqref{eq:prop_diagonal_delta_proof2} by $e_i^\top$ and $e_i$ from left and right and using $e_i^\top e_k=0$, we obtain $h_{11}(e_i^\top\Delta_je_k)^2\geq0$; since
$h_{11}<0$, this implies $e_i^\top\Delta_je_k=0$, i.e., $\Delta_j$ is diagonal.
Hence, $\Delta_j=\mathrm{diag}_{i=1}^{n_{w,j}}(\delta_{j,i})$ for some $\delta_{j,i}\in\mathbb{R}$.
Further, let $i,k\in\{1,\dots,n_{w,j}\}$ be arbitrary.
Then, choosing $P_j=h_j\otimes\Lambda$ with $\Lambda=(e_i-e_k)(e_i-e_k)^\top\succeq0$ and multiplying~\eqref{eq:prop_diagonal_delta_proof1} from left and right by $(e_i+e_k)^\top$ and $e_i+e_k$, respectively, leads to $h_{11}(\delta_{j,i}-\delta_{j,k})^2\geq0$, i.e., $\delta_{j,i}=\delta_{j,k}$.
\end{proof}

Proposition~\ref{prop:diagonal_delta} shows that any full-block uncertainty which satisfies $\Delta_j\in\bm{\Delta}_j$ for the multiplier class $\bm{P}_{\mathrm{rep}}$ is \emph{necessarily} of the repeated diagonal form $\Delta_j=\delta_jI$ for some $\delta_j$.
Hence, we can focus on unstructured uncertainty blocks $\Delta_j$ throughout the paper since any knowledge about a repeated diagonal structure of some $\Delta_j$ can be incorporated via an appropriate choice of $\bm{P}_j$ in Assumption~\ref{ass:prior_multipliers}.

Assuming negative definiteness of the left-upper block of some $P_j\in\bm{P}_j$ in Assumption~\ref{ass:prior_multipliers} is crucial in order to handle prior knowledge on the uncertainty structure (cf.\ Proposition~\ref{prop:diagonal_delta} and the results in Section~\ref{subsec:param_prior}).
This condition is, e.g., satisfied in the common case of defining the multipliers $\bm{P}_{\mathrm{full}}$ or $\bm{P}_{\mathrm{rep}}$ based on norm bounds, cf.~\eqref{eq:prior_bounds_Hj_hj}.

It is important to point out that the bounds in~\eqref{eq:prior_set_Delta_i} take a ``dual'' form involving $\Delta_j^\top$ instead of $\Delta_j$, the latter being more common in robust control.
For many practical uncertainty descriptions such as~\eqref{eq:prior_bounds}, it is possible under mild technical assumptions (if $H_j$ has $n_{w,j}$ positive and $n_{z,j}$ negative eigenvalues) to transform
bounds on $\Delta_j^\top$ into bounds on $\Delta_j$ (and vice versa) using the dualization lemma~\cite[Lemma 4.9]{scherer2000linear}.
As we have seen in Section~\ref{subsec:param_data}, such ``dual'' descriptions emerge when learning multipliers from data, which is why we work with bounds on $\Delta^\top$ instead of $\Delta$ throughout the paper.

\subsection{Prior knowledge on the disturbance (Assumption~\ref{ass:noise_multipliers})}\label{subsec:prior_disturbance}
Let us now illustrate the flexibility of modeling prior knowledge on the disturbance via Assumption~\ref{ass:noise_multipliers} by discussing various important special cases.
\subsubsection*{Quadratic full-block bounds}
Assumption~\ref{ass:noise_multipliers} contains knowledge of bounds such as
\begin{align}\label{eq:noise_bound_quadratic}
    \begin{bmatrix}D_{\tr}^\top\\I\end{bmatrix}^\top \begin{bmatrix}Q_d&S_d\\S_d^\top&R_d\end{bmatrix}\begin{bmatrix}D_{\tr}^\top\\I\end{bmatrix}\succeq0
\end{align}
with $Q_d\prec0$ as a simple special case when choosing $\bm{P}_d$ as
\begin{align}\label{eq:noise_multiplier_quadratic}
    \bm{P}_{\mathrm{quad}}=\left\{\lambda \begin{bmatrix}Q_d&S_d\\S_d^\top&R_d\end{bmatrix}\Bigm| \lambda>0\right\}.
\end{align}
The constraint~\eqref{eq:noise_bound_quadratic} encompasses various relevant scenarios:
\begin{enumerate}
\item A bound on the maximal singular value $\sigma_{\max}(D_{\tr})\leq\bar{d}_{2}$ or a norm bound on the sequence $\{\hat{d}_k\}_{k=0}^{N-1}$ such as $\lVert \hat{d}\rVert_2\leq\bar{d}_2$, both of which correspond to $Q_d=-I$, $S_d=0$, $R_d=\bar{d}_{2}^2I$.
\item A pointwise-in-time norm bound on the sequence $\{\hat{d}_k\}_{k=0}^{N-1}$, i.e., $\lVert\hat{d}_k\rVert_2\leq\bar{d}_{\infty}$, which corresponds to $Q_d=-I$, $S_d=0$, $R_d=\bar{d}_{\infty}^2NI$.
\item Further, inspired by~\cite{dandrea2001convex}, the above description also allows us to constrain the disturbance to be (approximately) contained in the kernel
of some (Toeplitz) matrix $T$, which might correspond to an LTI system:
For a scalar $\varepsilon\geq0$, the condition $DTT^\top D^\top\preceq\varepsilon I$ is of the form~\eqref{eq:noise_bound_quadratic} with $Q_d=-TT^\top,S_d=0,R_d=\varepsilon I$.
For instance, if it is known that $\{\hat{d}_k\}_{k=0}^{N-1}$ is constant, then $\varepsilon$ may be chosen as zero and $T$ may be taken to be
\begin{align*}
T=\begin{bmatrix}
-1&0&\dots&0\\
1&-1&\ddots&\vdots\\
0&\ddots&\ddots&0\\
\vdots&\ddots&1&-1\\
0&\dots&0&1\end{bmatrix}.
\end{align*}
Note that, in this case, $Q_d$ is only negative \emph{semi}-definite.
However, $\begin{bmatrix}I&0\end{bmatrix}P_d\begin{bmatrix}I&0\end{bmatrix}^\top\prec0$ for some $P_d\in\bm{P}_d$ can still be ensured, e.g., if the above description is combined with a bound as in 1) or 2) (see below for details).
Similarly, it is possible to construct Toeplitz matrices $T$ as above if $\{\hat{d}_k\}_{k=0}^{N-1}$ is periodic or, more generally, if it is generated by an LTI system.
\end{enumerate}

Quadratic full-block bounds as in~\eqref{eq:noise_bound_quadratic} have been used extensively in the recent literature on data-driven control~\cite{koch2022provably,persis2020formulas,berberich2020design,waarde2022from}.
The bound was employed for data-driven dissipativity analysis and controller design in~\cite{koch2022provably,waarde2022from}.
Under mild inertia assumptions on the inner matrix (cf.~\cite[Lemma 4.9]{scherer2000linear}),~\eqref{eq:noise_bound_quadratic} is equivalent to a bound on the (non-transposed) disturbance $D_{\tr}$ which was first considered in~\cite{persis2020formulas} and later generalized in~\cite{berberich2020design}.


\subsubsection*{Diagonal multipliers}
Let us assume that the disturbance satisfies a pointwise-in-time Euclidean norm bound of the form
\begin{align}\label{eq:noise_bound_pointwise_2}
\lVert\hat{d}_k\rVert_{2}\leq\bar{d}_2\>\>\text{for all}\>\>k=0,\dots,N-1
\end{align}
for some $\bar{d}_2>0$.
Then, the following is a valid class of multipliers in the sense of Assumption~\ref{ass:noise_multipliers}:
\begin{align}\label{eq:noise_multiplier_diagonal}
    \bm{P}_{\mathrm{diag}}=\left\{\begin{bmatrix}-\mathrm{diag}_{i=1}^N(\lambda_i)&0\\0&\sum_{i=1}^N\lambda_i\bar{d}_2^2I_{n_d}\end{bmatrix}\Bigm|
    \lambda_i\geq0\>\>\forall i\right\}.
\end{align}
The class~\eqref{eq:noise_multiplier_diagonal} includes one scalar multiplier $\lambda_i\geq0$ for each data point, whereas~\eqref{eq:noise_multiplier_quadratic} involves only one scalar multiplier for the full disturbance matrix.
We note that~\cite{martin2021dissipativity} and~\cite{bisoffi2021trade} use a similar approach with one multiplier per data point for dissipativity analysis of polynomial systems and stabilizing controller design for linear systems, respectively.

\subsubsection*{Convex hull multipliers}
Suppose the true disturbance $D_{\tr}$ lies in the convex hull generated by a set of \emph{known} matrices $\{\bar{D}_1,\dots,\bar{D}_{n_{c}}\}$.
This is, e.g., the case if the disturbance satisfies an infinity-norm bound
\begin{align}\label{eq:noise_bound_pointwise}
\lVert\hat{d}_k\rVert_{\infty}\leq\bar{d}_{\infty}\>\>\text{for all}\>\>k=0,\dots,N-1
\end{align}
for some $\bar{d}_{\infty}>0$.
Then, Assumption~\ref{ass:noise_multipliers} holds with $\bm{P}_d$ as
\begin{align}\label{eq:noise_multiplier_convex_hull}
    \bm{P}_{\mathrm{con}}=\left\{P_d=P_d^\top\Bigm|\begin{bmatrix}I\\0\end{bmatrix}^\top\!\!\! P_d\begin{bmatrix}\star\end{bmatrix}\preceq0,
    \begin{bmatrix}\bar{D}_i^\top\\I\end{bmatrix}^\top \!\!\!P_d\begin{bmatrix}\star\end{bmatrix}\succeq0\>\forall i\right\}\!.
\end{align}
Indeed, suppose $D_{\tr}=\sum_{i=1}^{n_{c}}\lambda_i\bar{D}_i$ for some $\lambda_i\geq0$ with $\sum_{i=1}^{n_{c}}\lambda_i=1$ and choose an arbitrary $P_d\in\bm{P}_{\mathrm{con}}$.
Since the left-upper block of $P_d$ is negative definite, the mapping $D\mapsto F(D)\coloneqq\begin{bmatrix}D^\top\\I\end{bmatrix}^\top \!\!\!P_d\begin{bmatrix}D\\I\end{bmatrix}$ is concave, which implies
\begin{align*}
    F(D_{\tr})=F\left(\sum_{i=1}^{n_{c}}\lambda_i\bar{D}_i\right)\succeq\sum_{i=1}^{n_{c}}\lambda_iF(\bar{D}_i)\stackrel{\eqref{eq:noise_multiplier_convex_hull}}{\succeq}0.
\end{align*}
\subsubsection*{Further multipliers and combinations}
As a key advantage of the proposed description, any two valid convex cones of (non-redundant) multipliers $\bm{P}_d^1$ and $\bm{P}_d^2$ can be combined as $\bm{P}_d=\bm{P}_d^1+\bm{P}_d^2$ in order to tighten the disturbance description.
For instance, one can combine diagonal multipliers $\bm{P}_d^1$ and convex hull multipliers $\bm{P}_d^2$ for a subset of the available data in order to potentially reduce conservatism.
Another noteworthy example is the combination of bounds as in~\eqref{eq:noise_bound_quadratic},~\eqref{eq:noise_bound_pointwise_2}, or~\eqref{eq:noise_bound_pointwise} with information on periodicity.

Finally, also in the context of Assumption~\ref{ass:noise_multipliers}, we note that one can work with more general classes of multipliers from the robust control literature, such as those based on Lagrange~\cite{scherer2006lmi}
or SOS~\cite[Section 6]{scherer2006matrix} relaxations, employing either implicit or explicit set descriptions of the disturbance bound.

\subsubsection*{Summary}
In the following, we discuss the relations among the above multiplier classes in order to illustrate their potentials for data-driven control.
If the disturbance satisfies an infinity-norm bound as in~\eqref{eq:noise_bound_pointwise}, then the convex hull multipliers $\bm{P}_{\mathrm{con}}$ provide the tightest and thus most powerful description among the three classes $\bm{P}_{\mathrm{quad}}$, $\bm{P}_{\mathrm{diag}}$, and $\bm{P}_{\mathrm{con}}$.
More precisely, when defining $\bm{P}_{\mathrm{quad}}$ as in~\eqref{eq:noise_multiplier_quadratic} via $Q_d=-I$, $S_d=0$, $R_d=n_d\bar{d}_{\infty}^2N$ (with $n_d$ resulting from $\lVert d\rVert_{2}\leq\sqrt{n_d}\lVert d\rVert_{\infty}$ for any $d\in{\Bbb R}^{n_d}$) as well as $\bm{P}_{\mathrm{diag}}$ as in~\eqref{eq:noise_multiplier_diagonal} via $\bar{d}_2=\sqrt{n_d}\bar{d}_{\infty}$ and $\bm{P}_{\mathrm{con}}$ as in~\eqref{eq:noise_multiplier_convex_hull}, then all three are valid according to Assumption~\ref{ass:noise_multipliers} and satisfy $\bm{P}_{\mathrm{quad}}\subseteq\bm{P}_{\mathrm{diag}}\subseteq\bm{P}_{\mathrm{con}}$.

Additionally, $\bm{P}_{\mathrm{diag}}$ and $\bm{P}_{\mathrm{con}}$ have a crucial qualitative advantage over $\bm{P}_{\mathrm{quad}}$.
In fact, if we use data affected by a disturbance with pointwise bounds such as~\eqref{eq:noise_bound_pointwise_2} or~\eqref{eq:noise_bound_pointwise} in our design approach, then the resulting closed-loop performance cannot deteriorate with $\bm{P}_{\mathrm{diag}}$ and $\bm{P}_{\mathrm{con}}$ if additional data points are added.
As we will see later in the paper, this is not the case when translating~\eqref{eq:noise_bound_pointwise_2} into~\eqref{eq:noise_bound_quadratic} via $Q_d=-I$, $S_d=0$, $R_d=\bar{d}^2NI$.
This is an  important insight since pointwise disturbance bounds are arguably more realistic and useful for practical applications if compared to the quadratic full-block bound~\eqref{eq:noise_bound_quadratic}.
On the other hand, implementing $\bm{P}_{\mathrm{quad}}$ only involves one scalar decision variable, whereas $\bm{P}_{\mathrm{diag}}$ and $\bm{P}_{\mathrm{con}}$ involve $N$ and $2^{n_dN}$ decision variables, respectively.
Thus, while $\bm{P}_{\mathrm{diag}}$ is applicable for medium-sized problems (cf.\ Section~\ref{sec:example}), $\bm{P}_{\mathrm{con}}$ can only be used for systems with low spatial disturbance dimension and few data points.

\begin{remark}
Let us emphasize that all results in this paper can be trivially extended in case that multiple trajectories of~\eqref{eq:sys} are available, simply by stacking them together, even if the \emph{concatenated} sequences do not constitute trajectories of~\eqref{eq:sys}.
In fact, the presented methods are applicable for arbitrary matrices $\tilde{X}_+$, $\tilde{X}$, $\tilde{U}$, possibly composed of several trajectory pieces, as long as they satisfy the data equation
\begin{align}
\tilde{X}_+=A\tilde{X}+B\tilde{U}+B_w\Delta_{\tr}(C_z\tilde{X}+D_z\tilde{U})+B_d\tilde{D}
\end{align}
for some $\tilde{D}\in\bm{D}$.
This observation is important, e.g., when dealing with unstable systems for which the generation of a reasonably long data trajectory can be difficult.
\end{remark}

\section{Towards tight design conditions}\label{sec:necessity}

Theorem~\ref{thm:robust_H2} and all further results proposed in this paper rely on a common quadratic Lyapunov function (CQLF).
In general, our results are not necessary for robust stability and performance certified with a CQLF for prior uncertainties and disturbance bounds. 
In the following, we discuss possibilities to reduce conservatism as well as conditions under which our design approach is indeed tight.

First, starting from the uncertainty description obtained via Lemma~\ref{lem:combined_multipliers}, one can further refine the representation by subsequently including additional multipliers for the individual components $\Delta_j$, for the full uncertainty $\Delta_{\tr}$, or for the transformed full uncertainty $\tDet$, and combinations thereof.
For instance, suppose we have convex cones of multipliers $\bm{P}_f$, $\bm{\tilde{P}}_f$ such that, for any $P_f\in\bm{P}_f$, $\tilde{P}_f\in\bm{\tilde{P}}_f$, it holds that
\begin{align}\label{eq:full_block_multipliers_for_Delta}
    \begin{bmatrix}\Delta_{\tr}^\top\\I\end{bmatrix}^\top P_f\begin{bmatrix}\Delta_{\tr}^\top\\I\end{bmatrix}\succeq0\>\>
    \text{and}\>\>
    \begin{bmatrix}\tDet^\top\\I\end{bmatrix}^\top \tilde{P}_f\begin{bmatrix}\tDet^\top\\I\end{bmatrix}\succeq0.
\end{align}
Then, a valid class of multipliers for $\tilde{\Delta}_{\tr}$ is obtained via
\begin{align}\label{eq:mult_extended}
    \bm{\tilde{P}}_{\mathrm{com}}+\begin{bmatrix}I&0\\0&B_w^\top\end{bmatrix}^\top\bm{P}_f
    \begin{bmatrix}I&0\\0&B_w^\top\end{bmatrix}
    +\bm{\tilde{P}}_f.
\end{align}
%
Here, $\bm{P}_f$ and $\bf{\tilde{P}}_f$ can be chosen, e.g., based on the S-procedure, a convex hull, Lagrange relaxations, or SOS relaxations, analogous to the multipliers for describing $\Dep$ and $\bf{D}$ (compare Section~\ref{sec:prior}).
All these refinements allow to gradually reduce conservatism and improve the guaranteed performance obtained via Theorem~\ref{thm:robust_H2}, simply by replacing $\bm{\tilde{P}}_{\mathrm{com}}$ in~\eqref{eq:prop_rob_H2_LMI2} by a larger class of multipliers as in~\eqref{eq:mult_extended}.

Further, there are various cases in which our framework provides necessary and sufficient conditions for robust stability and performance with a CQLF.
First, our results are tight if no prior knowledge on $\Det$ is available, i.e., $\Dep=\mathbb{R}^{n_w\times n_z}$, and if the disturbance is captured by~\eqref{eq:noise_multiplier_quadratic}.
In this case, the S-procedure provides sufficient \emph{and} necessary conditions (under a suitable constraint qualification), compare, e.g.,~\cite{waarde2022from}.

Since \eqref{eq:sys_prior_tilde} actually depends affinely on $\tilde{\Delta}$, our results are also necessary for robust stability and performance with a CQLF if $\tDecom$ is a (compact) polytope.
Indeed,
the full-block S-procedure~\cite{scherer2001lpv}
shows that \eqref{eq:prop_rob_H2_proof1} holds for all $\tilde\Delta\in\tDecom$ iff
there exists some symmetric $\tilde{P}_{\mathrm{com}}$ which satisfies~\eqref{eq:prop_rob_H2_LMI2} and
\begin{align}\label{eq:full_block}
    \begin{bmatrix}\tilde{\Delta}^\top\\I\end{bmatrix}^\top
    \tilde{P}_{\mathrm{com}}
    \begin{bmatrix}\tilde{\Delta}^\top\\I\end{bmatrix}\succeq0\>\>\text{for all}\>\>\tilde{\Delta}\in\tDecom.
\end{align}
By inspection, \eqref{eq:prop_rob_H2_LMI2} implies that the left-upper block of $\tilde{P}_{\mathrm{com}}$ is negative definite.
Therefore, a standard concavity argument shows that $\tilde{P}_{\mathrm{com}}$ just is, actually, a convex
hull multiplier for the polytope $\tDecom$.

These arguments apply if $\Delta$ is a full unstructured uncertainty and either i) no data but only prior bounds or ii) no prior bounds and only data with $Z$ of full row rank is available, and if the prior uncertainty or disturbance bounds are expressed by polytopes; then convex hull multipliers in Assumption~\ref{ass:prior_multipliers} or~\ref{ass:noise_multipliers} (cf.\ Section~\ref{sec:prior}) lead to tight conditions for controller design.

More generally, if both $\Dep$ and the disturbance set $\bm{D}$ are polytopes, then
$\tDep\cap\tDel$ is as well a polytope for which one can determine an explicit representation
as the convex hull of finitely many matrices. Using convex hull multipliers
in robust controller synthesis for~\eqref{eq:sys_prior_tilde} also  generates tight results.

In case that $\Dep$ and $\bm{D}$ are non-polytopic but have an LMI representation or are even just semi-algebraic (i.e., defined via polynomial inequalities), we emphasize that suitable multiplier classes can be constructed by SOS relaxation techniques.
The results in \cite{scherer2006matrix} open the way for constructing relaxation families which are asymptotically exact, but
typically at the cost of large computational complexity.
Quantifying the relaxation gap for different choices of multiplier classes for both $\Dep$ and $\bm{D}$ remains to be explored.

Finally, the fact that we work with CQLFs is an additional source of conservatism of our framework.
Reducing conservatism and improving performance via parameter-dependent Lyapunov functions is an interesting issue for future research.
It is worth noting that the main sources of conservatism of our framework are inherited from classical multiplier relaxations in robust control (respecting structure, considering a CQLF, etc.), but we do not introduce additional discrepancies.

\section{Extension to nonlinear uncertainties}\label{sec:nonlinear}
\begin{subequations}\label{eq:sys_nl}

In the following, we illustrate the flexibility of our framework by considering additional and more general uncertainties such as static nonlinearities.
We consider systems of the form
\begin{align}\label{eq:sys_nl1}
\left[
\begin{array}{c}x_{k+1}\\\hline e_k\\z_k\\z_k'\end{array}
\right]
&=\left[
\begin{array}{c|cccc}
A&B&B_d&B_w&B_w'\\\hline
C_e&D_{eu}&D_{ed}&0&0\\
C_z&D_{z}&0&0&D_{zw}'\\
C_z'&D_{z}'&0&0&0\end{array}\right]
\left[
\begin{array}{c}
x_k\\\hline u_k\\d_k\\w_k\\w_k'
\end{array}\right]\\\label{eq:sys_nl2}
w_k&=\Delta z_k,\>\>
w_k'=\Delta'(x_k)z_k',
\end{align}
\end{subequations}
which, if compared to~\eqref{eq:sys_prior}, have an additional uncertainty channel with variables $w_k'\in\mathbb{R}^{n_w'}$ and $z_k'\in\mathbb{R}^{n_z'}$, and with known matrices $B_w'$, $D_{zw}'$, $C_z'$, $D_{z}'$.
Moreover, $\Delta':\mathbb{R}^{n}\to\mathbb{R}^{n_w'\times n_z'}$ is a \emph{nonlinear} mapping satisfying $\Delta'(0)=0$.
This is mainly assumed for simplicity of notation. All of the following results can be extended to time-varying
or dynamic structured uncertainties $\Delta'$ after suitable adaptions and by working with \emph{integral} quadratic constraints as surveyed in~\cite{scherer2021dissipativity}.

\begin{assumption}\label{ass:nl_multipliers}
There exists a convex cone of symmetric matrices $\bm{P'}$ admitting an LMI representation such that, for any $P'\in\bm{P'}$, it holds that
\begin{align}\label{eq:ass_nl_multipliers}
    \begin{bmatrix}\Delta'(x)^\top\\I\end{bmatrix}^\top P'
    \begin{bmatrix}\Delta'(x)^\top\\I\end{bmatrix}\succeq0\quad
    \text{for all}\>\>x\in\mathbb{R}^{n}.
\end{align}
\end{assumption}
Multipliers $\bm{P'}$ as in Assumption~\ref{ass:nl_multipliers} are available if it is known that the norm of $\Delta'$ is bounded by $\gamma$, leading to
\begin{align*}
    \bm{P'}=\left\{P'\Bigm| P'=\lambda\begin{bmatrix}-I_{n_z'}&0\\0&\gamma^2 I_{n_w'}\end{bmatrix},\>\lambda>0\right\}.
\end{align*}
Further special cases include, e.g., sector or slope bounds, compare~\cite{veenman2016robust} for details.
We now show that our framework can handle nonlinear uncertainties $\Delta'$ as in~\eqref{eq:sys_nl}, provided that they satisfy Assumption~\ref{ass:nl_multipliers} and data of $\Delta'$ are available.
To this end, suppose that data $\{x_k\}_{k=0}^{N}$, $\{u_k,\Delta'(x_k)\}_{k=0}^{N-1}$ of~\eqref{eq:sys_nl} with some unknown \emph{true} uncertainty $\Delta=\Delta_{\tr}$ are available, which are affected by an unknown disturbance sequence $\{\hat{d}_k\}_{k=0}^{N-1}$ satisfying Assumption~\ref{ass:noise_multipliers}.
This allows us to determine $\{z_k'\}_{k=0}^{N-1}$ and hence $\{w_k'\}_{k=0}^{N-1}$.
We can then proceed as in Section~\ref{subsec:param_data} if $\Delta'$ is treated as a \emph{known} component of the LFT~\eqref{eq:sys_nl} and we learn multipliers only for the parametric uncertainty $\Det$.
To be precise, the set of uncertainties $\Delta$ consistent with the data can be defined exactly as in~\eqref{eq:Sigma_Delta_def}, with the only difference that the definitions of $M$ and $Z$ in~\eqref{eq:MZ_def} are modified to
\begin{align}\label{eq:MZ_def_nl}
M&\coloneqq X_+-AX-BU-B_w'W',\\\nonumber
Z&\coloneqq C_zX+D_zU+D_{zw}'W'.
\end{align}
Lemma~\ref{lem:data_bound} remains true for these matrices $M$ and $Z$ and the corresponding (modified) definition of $\tDel$ based on~\eqref{eq:data_multipliers} with $M$ and $Z$ as in~\eqref{eq:MZ_def_nl}.
Hence, we can combine prior knowledge with data of~\eqref{eq:sys_nl} to learn a class of multipliers $\bm{\tilde{P}}_{\mathrm{com}}$ for the uncertainty $\Delta_{\tr}$, i.e., $B_w\Delta_{\tr}\in\tDecom$ with $\tDecom$ as in~\eqref{eq:tDecom_def}.
In doing so, the nonlinear uncertainty $\Delta'$ does not pose any additional difficulties.
We now employ the multipliers $\bm{\tilde{P}}_{\mathrm{com}}$ and $\bm{P'}$ for $B_w\Delta$ and $\Delta'$, respectively, to design state-feedback controllers which guarantee robust quadratic performance of~\eqref{eq:sys_nl}.
\begin{definition}\label{def:quad_perf}
We say that the system~\eqref{eq:sys_nl} under state-feedback $u_k=Kx_k$ satisfies robust quadratic performance with index $P_p=\begin{bmatrix}Q_p&S_p\\S_p^\top&R_p\end{bmatrix}$, if there exists an $\varepsilon>0$ such that
\begin{align}\label{eq:quad_perf}
\sum_{k=0}^\infty \begin{bmatrix}d_k\\e_k\end{bmatrix}^\top
P_p\begin{bmatrix}d_k\\e_k\end{bmatrix}\leq-\varepsilon\sum_{k=0}^\infty {d_k}^\top d_k
\end{align}
for all $d\in\ell_2$ and all uncertainties $\Delta\in\Decom$ and $\Delta'$ satisfying Assumption~\ref{ass:nl_multipliers}.
\end{definition}
Definition~\ref{def:quad_perf} includes standard performance specifications such as, e.g., a bound on the $\mathcal{L}_2$-gain from $d$ to $e$, which corresponds to the choice $Q_p=-\gamma^2 I$, $S_p=0$, $R_p=I$, or strict passivity of the channel $d\mapsto e$, which corresponds to $Q_p=0$, $S_p=-I$, $R_p=0$.
We assume that $P_p$ is invertible and we define $\begin{bmatrix}\tilde{Q}_p&\tilde{S}_p\\\tilde{S}_p^\top&\tilde{R}_p\end{bmatrix}
\coloneqq P_p^{-1}$ as well as $\tilde{P}_p\coloneqq\begin{bmatrix}-\tilde{R}_p&\tilde{S}_p^\top\\\tilde{S}_p&-\tilde{Q}_p\end{bmatrix}$.
Further, we assume $\tilde{Q}_p\preceq0$, which holds for the most common performance specifications as those above.
The following result provides a design procedure for state-feedback controllers with guaranteed robust quadratic performance of~\eqref{eq:sys_nl}.

\begin{theorem}\label{thm:robust_perf_nl}
If Assumptions~\ref{ass:prior_multipliers}--\ref{ass:nl_multipliers} hold and there exist ${\mathcal{Y}\succ0}$, $K$, $\tilde{P}_{\mathrm{com}}\in\bm{\tilde{P}}_{\mathrm{com}}$, and $P'\in\bm{P'}$
such that~\eqref{eq:prop_robust_perf_nl_LMI} holds, then~\eqref{eq:sys_nl} with $u_k=Kx_k$ is stable and satisfies robust quadratic performance with index $P_p$.
\begin{figure*}
\vspace{0pt}
\begin{align}\label{eq:prop_robust_perf_nl_LMI}
\mleft[\begin{array}{cccc}I&0&0&0\\
(A+BK)^\top&(C_z'+D_z'K)^\top&(C_z+D_zK)^\top&(C_e+D_{eu}K)^\top\\\hline
0&I&0&0\\
{B_w'}^\top&0&{D_{zw}'}^\top&0\\\hline
0&0&I&0\\
I&0&0&0\\\hline
0&0&0&I\\
B_d^\top&0&0&D_{ed}^\top\end{array}\mright]^\top
\mleft[
\begin{array}{c|c|c|c}
\begin{matrix}-\mathcal{Y}&0\\0&\mathcal{Y}\end{matrix}
&\begin{matrix}0&0\\0&0\end{matrix}&\begin{matrix}0&0\\0&0\end{matrix}&\begin{matrix}0&0\\0&0\end{matrix}\\\hline
\begin{matrix}0&0\\0&0\end{matrix}&P'&\begin{matrix}0&0\\0&0\end{matrix}&\begin{matrix}0&0\\0&0\end{matrix}\\\hline
\begin{matrix}0&0\\0&0\end{matrix}&\begin{matrix}0&0\\0&0\end{matrix}&\tilde{P}_{\mathrm{com}}&\begin{matrix}0&0\\0&0\end{matrix}\\\hline
\begin{matrix}0&0\\0&0\end{matrix}&\begin{matrix}0&0\\0&0\end{matrix}&\begin{matrix}0&0\\0&0\end{matrix}&
\tilde{P}_p
\end{array}\mright]
\mleft[\begin{array}{c}\star\end{array}\mright]\prec0
\end{align}
\noindent\makebox[\linewidth]{\rule{\textwidth}{0.4pt}}
\end{figure*}
\end{theorem}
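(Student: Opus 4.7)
The plan is to mimic the strategy of Theorem~\ref{thm:robust_H2} but accommodate (i) the extra structured nonlinear channel $w'=\Delta'(x)z'$ and (ii) the dual/quadratic-performance form of the inequality encoded by $\tilde{P}_p=\begin{bmatrix}-\tilde{R}_p&\tilde{S}_p^\top\\\tilde{S}_p&-\tilde{Q}_p\end{bmatrix}$. First, I would apply Lemma~\ref{lem:combined_multipliers} to replace the uncertainty $\Delta\in\Decom$ in~\eqref{eq:sys_nl} by the transformed uncertainty $\tilde{\Delta}=B_w\Delta\in\tDecom$, so that any admissible $\tilde{\Delta}$ satisfies the multiplier inequality in~\eqref{eq:tDecom_def} for every $\tilde{P}_{\mathrm{com}}\in\bm{\tilde{P}}_{\mathrm{com}}$. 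Combined with Assumption~\ref{ass:nl_multipliers} for $\Delta'$, this places us in a standard robust control configuration with two structured uncertainty blocks plus the performance channel $d\mapsto e$.

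Second, I would convert the synthesis LMI~\eqref{eq:prop_robust_perf_nl_LMI} into an equivalent analysis LMI by invoking the dualization lemma \cite[Lemma~4.9]{scherer2000linear} (exactly as done in the standard robust synthesis arguments in \cite{scherer2000robust,scherer2001lpv}). Under the change of variables $\mathcal{X}=\mathcal{Y}^{-1}$, each of the four diagonal blocks in~\eqref{eq:prop_robust_perf_nl_LMI} is swapped with its inverse: the Lyapunov block $\mathrm{diag}(-\mathcal{Y},\mathcal{Y})$ becomes $\mathrm{diag}(\mathcal{X},-\mathcal{X})$; the performance block $\tilde{P}_p$ returns to $P_p$; and the two multiplier blocks $\tilde{P}_{\mathrm{com}}$ and $P'$ are replaced by inverse multipliers belonging to the dual cones of $\bm{\tilde{P}}_{\mathrm{com}}$ and $\bm{P'}$. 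The invertibility and inertia hypotheses required for dualization are precisely those inherited from Assumptions~\ref{ass:prior_multipliers}--\ref{ass:nl_multipliers} (negative definite left-upper blocks, which ensure the two multipliers have the right signatures) together with $\tilde{Q}_p\preceq 0$; verifying these carefully is, I expect, the main technical obstacle.

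Third, the resulting dualized inequality, together with the full-block S-procedure applied against the two multiplier inequalities for $\tilde{\Delta}$ and $\Delta'$, yields, along every closed-loop trajectory of~\eqref{eq:sys_nl} under $u_k=Kx_k$ and for some $\varepsilon>0$, the pointwise dissipation inequality
\begin{equation*}
V(x_{k+1})-V(x_k)+\begin{bmatrix}d_k\\e_k\end{bmatrix}^{\!\top}\!P_p\begin{bmatrix}d_k\\e_k\end{bmatrix}\le -\varepsilon\, d_k^\top d_k,
\end{equation*}
where $V(x)=x^\top\mathcal{X}x$, uniformly over all $\Delta\in\Decom$ and all $\Delta'$ satisfying Assumption~\ref{ass:nl_multipliers}. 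This is the counterpart of~\eqref{eq:prop_rob_H2_proof1} in the proof of Theorem~\ref{thm:robust_H2}, now enriched by the additional structured channel and a general performance supply rate.

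Finally, summing the dissipation inequality from $k=0$ to $\infty$ starting from $x_0=0$, using $V(x_k)\ge 0$, gives~\eqref{eq:quad_perf} for any $d\in\ell_2$, which is exactly robust quadratic performance with index $P_p$ in the sense of Definition~\ref{def:quad_perf}. Closed-loop stability follows by the standard Lyapunov argument: setting $d=0$ and using $\Delta'(0)=0$, the state-dependent uncertainty vanishes at the origin, and the strict inequality implies $V$ is a strict Lyapunov function for the autonomous closed loop, hence the origin is asymptotically (in fact, exponentially) stable for every admissible $\Delta$ and $\Delta'$.
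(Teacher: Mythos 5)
Your overall architecture (transform the uncertainty via Lemma~\ref{lem:combined_multipliers}, dualize, apply the full-block S-procedure, conclude with a dissipation/Lyapunov argument) points in the right direction, but the central step is carried out in a way that the paper explicitly identifies as untenable: you propose to dualize the \emph{entire} four-block synthesis inequality~\eqref{eq:prop_robust_perf_nl_LMI}, replacing each diagonal block by (an inverse of) itself, in particular swapping $\tilde{P}_{\mathrm{com}}$ for an ``inverse multiplier in the dual cone.'' The dualization lemma~\cite[Lemma 4.9]{scherer2000linear} requires the relevant blocks to be nonsingular with a specific inertia, and the combined multiplier $\tilde{P}_{\mathrm{com}}\in\bm{\tilde{P}}_{\mathrm{com}}$ has neither property in general: it is built from prior multipliers (which may be singular, e.g.\ padded with zero blocks across channels) and from the learnt multipliers~\eqref{eq:data_multipliers}, which are congruence transforms of $P_d$ by data-dependent outer factors and carry no guaranteed signature. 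This is exactly why the paper stresses, right after the theorem, that one must \emph{not} simply dualize~\eqref{eq:prop_robust_perf_nl_LMI}, since that would impose restrictive inertia assumptions on $\bm{\tilde{P}}_{\mathrm{com}}$ and hence on the data. Moreover, even when an inverse exists, it would yield a primal-form constraint on $\tilde{\Delta}$ rather than the dual (transposed) description~\eqref{eq:tDecom_def} that actually characterizes $\tDecom$, so the subsequent S-procedure step would not connect to the learnt uncertainty set.

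The paper's proof circumvents this by ordering the steps differently: it first applies the full-block S-procedure with respect to $\tilde{\Delta}$ only, which eliminates the $\tilde{P}_{\mathrm{com}}$ block and produces the family of $\tilde{\Delta}$-dependent inequalities~\eqref{eq:prop_robust_perf_nl_proof1}; only then is the dualization lemma applied, and only to the remaining three blocks (Lyapunov block, $P'$, $\tilde{P}_p$), whose inertia \emph{can} be certified. Note also that your claim that the inertia needed for dualization is ``inherited from Assumptions~\ref{ass:prior_multipliers}--\ref{ass:nl_multipliers}'' is not accurate for $P'$: Assumption~\ref{ass:nl_multipliers} does not force a definite left-upper block, and the paper has to derive the required signature of $P'$ from $\Delta'(0)=0$ (giving $\begin{bmatrix}0&I\end{bmatrix}P'\begin{bmatrix}0&I\end{bmatrix}^\top\succeq0$, sharpened by an $\varepsilon$-perturbation) together with the second diagonal block of~\eqref{eq:prop_robust_perf_nl_LMI} and $\mathcal{Y}\succ0$; similarly the inertia of $\tilde{P}_p$ uses $\tilde{Q}_p\preceq0$ and invertibility of $P_p$. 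A further dualization is then needed to show that the resulting $\tilde{P}'$ is a valid (non-transposed) multiplier for $\Delta'$, i.e.\ that~\eqref{eq:prop_robust_perf_nl_proof3} is equivalent to~\eqref{eq:ass_nl_multipliers}, before invoking the robust quadratic performance result~\cite[Theorem 10.4]{scherer2000robust}. Your final summation and Lyapunov arguments are fine, but as written the proof does not go through without reorganizing the S-procedure/dualization steps and supplying these inertia arguments.
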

\begin{proof}
We first show that any $P'\in\bm{P'}$ with~\eqref{eq:prop_robust_perf_nl_LMI} also satisfies a useful inertia property.
If using $\Delta'(0)=0$ in~\eqref{eq:ass_nl_multipliers}, we infer $\begin{bmatrix}0&I_{n_w'}\end{bmatrix} P'\begin{bmatrix}0&I_{n_w'}\end{bmatrix}^\top\succeq0$.
After potentially perturbing $P'$ by $\varepsilon I$ for some small $\varepsilon>0$, we obtain $\begin{bmatrix}0&I_{n_w'}\end{bmatrix} P'\begin{bmatrix}0&I_{n_w'}\end{bmatrix}^\top\succ0$ without destroying~\eqref{eq:ass_nl_multipliers} or~\eqref{eq:prop_robust_perf_nl_LMI}.
On the other hand, the second diagonal block of~\eqref{eq:prop_robust_perf_nl_LMI} reads
\begin{align*}
    (C_z'+D_z'K)\mathcal{Y}(C_z'+D_z'K)^\top+
    \begin{bmatrix}I_{n_z'}&0\end{bmatrix}P'\begin{bmatrix}I_{n_z'}&0\end{bmatrix}^\top\prec0.
\end{align*}
Using $\mathcal{Y}\succ0$, this implies $\begin{bmatrix}I_{n_z'}&0\end{bmatrix}P'\begin{bmatrix}I_{n_z'}&0\end{bmatrix}^\top\prec0$.
By the min-max principle of Courant-Fischer, $P'$ has $n_z'$ negative and $n_w'$ positive eigenvalues and is, hence, invertible.
Similarly, due to $\tilde{Q}_p\preceq0$, feasibility of~\eqref{eq:prop_robust_perf_nl_LMI}
and invertibility of $\tilde{P}_p$, the matrix $\tilde{P}_p$ has $n_e$ negative and $n_d$ positive eigenvalues.

Using the full-block S-procedure~\cite{scherer2001lpv} together with the definition of $\tDecom$ in~\eqref{eq:tDecom_def},~\eqref{eq:prop_robust_perf_nl_LMI} implies that~\eqref{eq:prop_robust_perf_nl_proof1} holds for any $\tilde{\Delta}\in\tDecom$, if defining $\mathcal{A}(\tilde{\Delta})\coloneqq A+BK+\tilde{\Delta}(C_z+D_zK)$.
\begin{figure*}
\vspace{0pt}
\begin{align}\label{eq:prop_robust_perf_nl_proof1}
\mleft[\begin{array}{ccc}I&0&0\\
\mathcal{A}(\tilde{\Delta})^\top&(C_z'+D_z'K)^\top&(C_e+D_{eu}K)^\top\\\hline
0&I&0\\
(B_w'+\tilde{\Delta}D_{zw}')^\top&0&0\\\hline
0&0&I\\
B_d^\top&0&D_{ed}^\top\end{array}\mright]^\top
\mleft[
\begin{array}{c|c|c}
\begin{matrix}-\mathcal{Y}&0\\0&\mathcal{Y}\end{matrix}
&\begin{matrix}0&0\\0&0\end{matrix}&\begin{matrix}0\>\>&\>\>0\\0\>\>&\>\>0\end{matrix}\\\hline
\begin{matrix}0&0\\0&0\end{matrix}&P'&\begin{matrix}0\>\>&\>\>0\\0\>\>&\>\>0\end{matrix}\\\hline
\begin{matrix}0&0\\0&0\end{matrix}&\begin{matrix}0&0\\0&0\end{matrix}&
\begin{matrix}-\tilde{R}_p&\tilde{S}_p^\top\\\tilde{S}_p&-\tilde{Q}_p
\end{matrix}
\end{array}\mright]
\mleft[\begin{array}{ccc}\star&\star&\star\\\star&\star&\star\\\hline\star&\star&\star\\
\star&\star&\star\\\hline\star&\star&\star\\\star&\star&\star\end{array}\mright]\prec0
\end{align}
\noindent\makebox[\linewidth]{\rule{\textwidth}{0.4pt}}
\end{figure*}
Due to the above inertia properties of $P'$ and $\tilde{P}_p$, we can apply
the dualization lemma~\cite[Lemma 4.9]{scherer2000linear} to~\eqref{eq:prop_robust_perf_nl_proof1} and infer that $\mathcal{X}\coloneqq\mathcal{Y}^{-1}\succ0$ satisfies~\eqref{eq:prop_robust_perf_nl_proof2} for any $\tilde{\Delta}\in\Decom$, where
\begin{align*}
    \tilde{P}'\coloneqq\begin{bmatrix}0&-I\\I&0\end{bmatrix}^\top P\text{$'$}^{-1}\begin{bmatrix}0&-I\\I&0\end{bmatrix}^\top.
\end{align*}
\begin{figure*}
\vspace{0pt}
\begin{align}\label{eq:prop_robust_perf_nl_proof2}
\mleft[\begin{array}{ccc}I&0&0\\
\mathcal{A}(\tilde{\Delta})&B_w'+\tilde{\Delta}D_{zw}'&B_d\\\hline
0&I&0\\
C_z'+D_z'K&0&0\\\hline
0&0&I\\
C_e+D_{eu}K&0&D_{ed}\end{array}\mright]^\top
\mleft[
\begin{array}{c|c|c}
\begin{matrix}-\mathcal{X}&0\\0&\mathcal{X}\end{matrix}
&\begin{matrix}0&0\\0&0\end{matrix}&\begin{matrix}0\>\>&\>\>0\\0\>\>&\>\>0\end{matrix}\\\hline
\begin{matrix}0&0\\0&0\end{matrix}&\tilde{P}'&\begin{matrix}0\>\>&\>\>0\\0\>\>&\>\>0\end{matrix}\\\hline
\begin{matrix}0&0\\0&0\end{matrix}&\begin{matrix}0&0\\0&0\end{matrix}&
\begin{matrix}Q_p&S_p\\S_p^\top&R_p
\end{matrix}
\end{array}\mright]
\mleft[\begin{array}{ccc}\star&\star&\star\\\star&\star&\star\\\hline\star&\star&\star\\
\star&\star&\star\\\hline\star&\star&\star\\\star&\star&\star\end{array}\mright]\prec0
\end{align}
\noindent\makebox[\linewidth]{\rule{\textwidth}{0.4pt}}
\end{figure*}
Using~\cite[Theorem 10.4]{scherer2000robust} and $B_w\Decom\subseteq\tDecom$, we infer that~\eqref{eq:sys_nl} under $u_k=Kx_k$ is stable and satisfies robust quadratic performance with index $P_p$ for all $\Delta\in\Decom$ if
\begin{align}\label{eq:prop_robust_perf_nl_proof3}
    \begin{bmatrix}\Delta'(x)\\I\end{bmatrix}^\top
    \tilde{P}'
    \begin{bmatrix}\Delta'(x)\\I\end{bmatrix}\succeq0\quad
    \text{for all}\>\>x\in\mathbb{R}^{n}.
\end{align}
Using again the inertia property of $P'$, the dualization lemma~\cite[Lemma 4.9]{scherer2000linear} implies that~\eqref{eq:prop_robust_perf_nl_proof3} is equivalent to~\eqref{eq:ass_nl_multipliers}, which concludes the proof.
\end{proof}

Since the uncertainty description given by $\tDecom$ in~\eqref{eq:tDecom_def} is of a \emph{dual} form, involving $\tilde{\Delta}^\top$ instead of $\tilde{\Delta}$, Theorem~\ref{thm:robust_perf_nl} formulates a design condition for the dual system.
The synthesis inequality~\eqref{eq:prop_robust_perf_nl_LMI} can also be interpreted as enforcing quadratic performance with index $\tilde{P}_p$ for the \emph{dual} system.
It is essential in the proof of Theorem~\ref{thm:robust_perf_nl} to not simply dualize~\eqref{eq:prop_robust_perf_nl_LMI} in order to arrive at a condition for the original system since this would require restrictive inertia assumptions on the set of multipliers $\bm{\tilde{P}}_{\mathrm{com}}$ and, thus, on the data.

Defining $L=K\mathcal{Y}$ and applying the Schur complement, one reformulates~\eqref{eq:prop_robust_perf_nl_LMI} as an LMI, exactly as in Section~\ref{subsec:design_H2}.
Thus, Theorem~\ref{thm:robust_perf_nl} provides a direct design procedure for controllers with guaranteed closed-loop quadratic performance for all parametric uncertainties $\Delta\in\Decom$ and all nonlinear uncertainties $\Delta'$ satisfying Assumption~\ref{ass:nl_multipliers}.
Note that performance is guaranteed as in~\eqref{eq:quad_perf}, i.e., over an \emph{infinite} time-horizon and for \emph{arbitrary} disturbance inputs $d$ not necessarily satisfying a bound such as $D\in\bm{D}$.

The recent papers~\cite{luppi2022on,waarde2021matrix} address data-driven control of LTI systems interconnected with static nonlinearities in case of noise-free data.
In contrast to our approach, these works consider only one multiplier which is fixed a priori to describe the nonlinearity, whereas optimizing over a whole class of multipliers $\bm{P'}$ as in Theorem~\ref{thm:robust_perf_nl} increases the flexibility and typically reduces the conservatism significantly.
Finally, Theorem~\ref{thm:robust_perf_nl} can also be seen as an alternative to recent results on data-driven control of linear parameter-varying systems~\cite{verhoek2021data}.

\section{Numerical examples}\label{sec:example}
In this section, we showcase the potential of our framework with two numerical examples.
First, we explore the influence of different forms of prior knowledge, disturbance multipliers, and data parameters on the closed-loop performance (Sections~\ref{subsec:ex_H2} and~\ref{subsec:ex_H2_multipliers}).
Next, in Section~\ref{subsec:ex_Hinf}, we exploit the possibility of including prior knowledge to perform data-driven $\mathcal{H}_{\infty}$-loop-shaping for the satellite system in Example~\ref{ex:introduction}.

\subsection{Robust $\mathcal{H}_2$-performance}
\label{subsec:ex_H2}%
We consider an academic example of the form
\begin{align}\label{eq:ex_H2}
x_{k+1}=\begin{bmatrix}0&0.5&-0.3\\
\delta_1&\Delta_{11}&\Delta_{12}\\
0.1&\Delta_{21}&\Delta_{22}\end{bmatrix}x_k+\begin{bmatrix}\delta_1\\1\\0.5
\end{bmatrix}u_k+d_k,
\end{align}
where $\Delta\coloneqq\mathrm{diag}_{j=1}^{2}(\Delta_j)$ for $\Delta_1\!=\!\delta_1I_2$ with
\emph{true} value $\delta_{1,\tr}=0.2$ and $\Delta_2=\begin{bmatrix}\Delta_{11}&\Delta_{12}\\\Delta_{21}&\Delta_{22}\end{bmatrix}$
with
$\Delta_{2,\tr}=\begin{bmatrix}0.5&-0.2\\-0.1&0.3\end{bmatrix}$.
It is straightforward to determine matrices $A$, $B$, $B_d$, $B_w$, $C_z$, and $D_z$ such that~\eqref{eq:ex_H2} is equivalent to~\eqref{eq:sys}.
We assume that a trajectory $\{x_k\}_{k=0}^{N}$, $\{u_k\}_{k=0}^{N-1}$ of the true system with length $N=200$ is available, which is generated by an input and disturbance sequence sampled uniformly as $u_k\in[-1,1]$ and satisfying $\hat{d}_k\in\{d\in\mathbb{R}^3\mid \lVert d\rVert_{\infty}\leq\bar{d}\}$ for some noise level $\bar{d}>0$, respectively.
For modeling the noise bound, we choose $Q_d=-I$, $S_d=0$, $R_d=\bar{d}^2n_dNI$ in~\eqref{eq:noise_bound_quadratic} 
and~\eqref{eq:noise_multiplier_quadratic}.
We consider four different scenarios for controller design:
\begin{enumerate}
\item All numerical values in~\eqref{eq:ex_H2} are known except for the values of $\delta_1$ and $\Delta_2$.
Moreover, uncertainty bounds of the form $\delta_{1,\tr}^2\leq0.1$ and $\Delta_{2,\tr}\Delta_{2,\tr}^\top\preceq0.5I$ are
assumed to be available and included by using the multiplier classes $\bm{P}_1$ and $\bm{P}_2$ as in~\eqref{eq:prior_bounds_multipliers} for $\Delta_1$ and $\Delta_2$, respectively.

\item All numerical values in~\eqref{eq:ex_H2} are unknown and no prior uncertainty bound on $\Delta$ is available, but the trajectory $\{x_k\}_{k=0}^{N}$, $\{u_k\}_{k=0}^{N-1}$ of length $N=200$ and generated according to the above specifications is known.

\item The combined information of 1) and 2) is available, i.e., all numerical values in~\eqref{eq:ex_H2} are known except for $\delta_1$ and $\Delta_2$, and we can use bounds as in 1) and data as in 2).

\item All numerical values in~\eqref{eq:ex_H2} are known, including the true values of $\delta_1$ and $\Delta_2$, i.e., the system~\eqref{eq:ex_H2} is known exactly without any uncertainty.
\end{enumerate}
For each of the four scenarios, we use Theorem~\ref{thm:robust_H2} to design a controller to minimize the respective bound on the $\mathcal{H}_2$-norm of the performance channel defined via $C_e=\begin{bmatrix}I\\0\end{bmatrix}$, $D_{eu}=\begin{bmatrix}0\\\frac{1}{5}I\end{bmatrix}$.
This corresponds to a linear-quadratic regulation problem with weight matrices $I$ and $\frac{1}{25}I$ for the state and input, respectively.
The optimal $\mathcal{H}_2$-norm bounds which can be guaranteed for each of the four scenarios
depending on the noise level are displayed in Figure~\ref{fig:example_H2}.

\begin{figure}
\begin{center}
\includegraphics[width=0.49\textwidth]{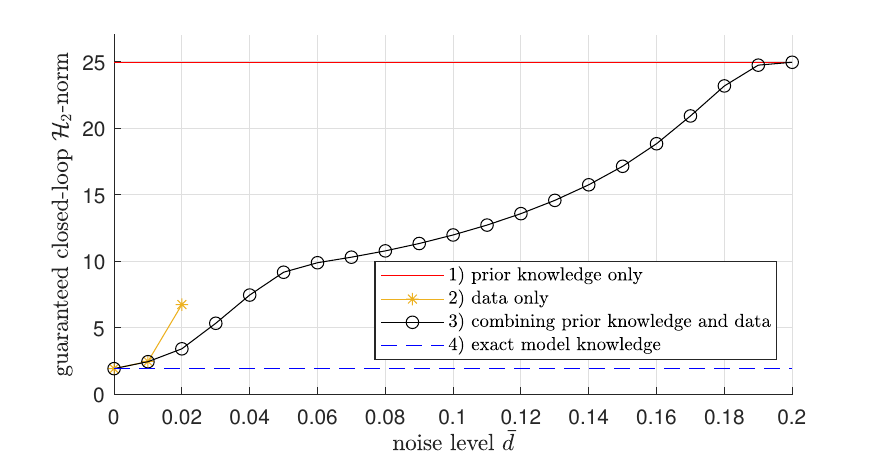}
\end{center}
\caption{Guaranteed closed-loop $\mathcal{H}_2$-norm according to the four scenarios in Section~\ref{subsec:ex_H2}, depending on the noise level $\bar{d}$.
The different scenarios are 1) using only prior knowledge, 2) using only available data (equivalent to~\cite{waarde2022from}), 3) using prior knowledge and data, and 4) using exact model knowledge.}
\label{fig:example_H2}
\end{figure}
First, we note that it is possible to design a stabilizing controller with some guaranteed $\mathcal{H}_2$-norm for scenarios 1), 3), and 4) for all chosen noise levels.
The designs in scenario 1) and 4) are independent of $\bar{d}$ since they do not involve the measured data.
For the purely data-driven design in scenario 2), which is equivalent to the approach by~\cite{waarde2022from}, a stabilizing design is only possible for very small noise levels with $\bar{d}\leq0.02$.
Figure~\ref{fig:example_H2} indicates that the information provided by the data is more useful for design than the prior knowledge as long as $\bar{d}\leq0.02$.
In particular, if the disturbance bound $\bar{d}$ tends to zero, then the data provide an exact description of the uncertainty and both scenarios 2) and 3) match the performance $\gamma=1.93$ of scenario 4) with exact model knowledge.
On the other hand, for increasing noise levels, the prior knowledge is instrumental since the data alone do not allow for a stabilizing design.
The performance in scenario 3) combining prior knowledge and data matches that of scenario 2) for
$\bar{d}=0$ and that of scenario 1) for $\bar{d}=0.2$, which correspond to the extreme cases where the data either uniquely specify the true uncertainty or are of little use, respectively.
Notably, for any noise levels $\bar{d}\in[0,0.2]$, scenario 3) provides the best possible performance bounds for the chosen multiplier classes since it trades off the available information in a systematic fashion.

For comparison, we also compute a model-based optimal $\mathcal{H}_2$-controller for an identified model resulting from least-squares estimation.
To be precise, for the noise level $\bar{d}=0.1$ we use the data described above as well as the matrices $A$, $B$, $B_w$, $C_z$, $D_z$ to solve
\begin{align*}
    X_+=AX+BU+B_w\mathrm{diag}(\delta_1I_2,\Delta_2)(C_zX+D_zU)
\end{align*}
for $\delta_1$ and $\Delta_2$ in a least-squares sense while satisfying the constraints $\delta_1^2\leq0.1$ and $\Delta_2\Delta_2^\top\preceq0.5I$.
For the resulting system one can design an optimal
$\mathcal{H}_2$-controller with performance level $1.87$.
However, the $\mathcal{H}_2$-performance when applying this controller to the \emph{true} system is $1.93$.
Although one might argue that the least-squares-based controller outperforms all those in scenarios 1)-3) for this specific noise level, it does so with an overly high confidence.
This can be attributed to the fact that it does not take the uncertainty due to the noise into account and, hence, fails to come along with any theoretical guarantees.
In fact, the least-squares-based controller even fails to robustly stabilize~\eqref{eq:sys_prior} w.r.t.\ the prior uncertainty set $\Dep$.
Thus, without any further knowledge about the true plant, it is impossible to certify any properties for the closed loop.

\subsection{Comparison of different disturbance multipliers}\label{subsec:ex_H2_multipliers}

Let us now analyze the influence of the choice of multipliers used to describe the disturbance bound $\lVert\hat{d}_k\rVert_{\infty}\leq\bar{d}$ on the closed-loop performance for the same setting as in Section~\ref{subsec:ex_H2}.
We focus on scenario 3), the combined case with prior knowledge and data, for four classes of multipliers $\bm{P}_d$:
i) simple multipliers~\eqref{eq:noise_multiplier_quadratic} with $N=200$ as in Section~\ref{subsec:ex_H2}),
ii) convex hull multipliers~\eqref{eq:noise_multiplier_convex_hull} for which we use only $N=5$ data points for computational reasons, and
diagonal multipliers~\eqref{eq:noise_multiplier_diagonal}, where we translate $\lVert\hat{d}_k\rVert_{\infty}\leq\bar{d}$ into $\lVert\hat{d}_k\rVert_2\leq\sqrt{n_d}\bar{d}$ with iii) $N=5$ and iv) $N=20$ data points.
For each of these choices, the robust bounds on the closed-loop $\mathcal{H}_2$-norm based on Theorem~\ref{thm:robust_H2} are displayed in Figure~\ref{fig:example_H2_multipliers_noise}.

\begin{figure}
\begin{center}
\includegraphics[width=0.49\textwidth]{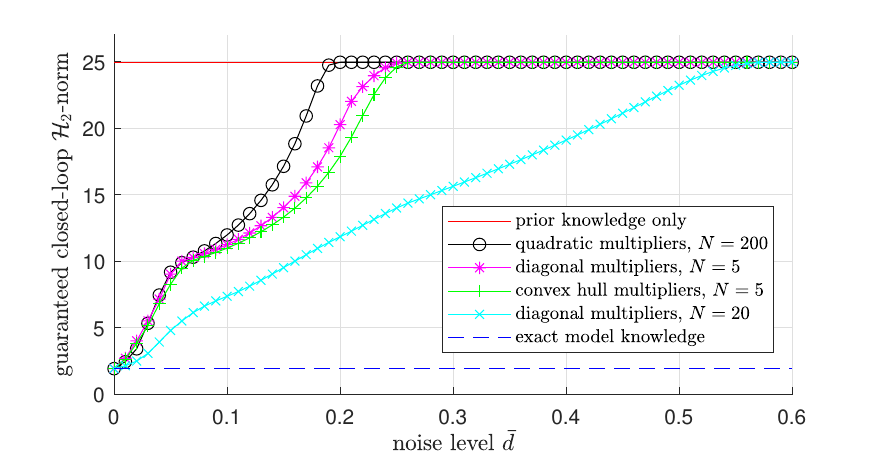}
\end{center}
\caption{Guaranteed closed-loop $\mathcal{H}_2$-norm when using prior knowledge and data (scenario 3) in Section~\ref{subsec:ex_H2}), depending on the noise level $\bar{d}$ and for different disturbance multipliers in Assumption~\ref{ass:noise_multipliers}.}
\label{fig:example_H2_multipliers_noise}
\end{figure}

First, we observe the same qualitative behavior for all multiplier classes:
The guaranteed performance levels approach that of the nominal $\mathcal{H}_2$-controller for small noise levels and that of the robust controller using only prior knowledge for large ones.
The convex hull and diagonal multipliers both have superior performance if compared to the simple ones in~\eqref{eq:noise_multiplier_quadratic} for most noise levels, even when using significantly fewer data points ($N=5$ instead of $N=200$).
The best performance is achieved when using diagonal multipliers with $N=20$ data points.
This indicates an advantage of diagonal multipliers over convex hull multipliers:
The reduced computational complexity allows us to employ longer data trajectories which, in turn, improves the resulting performance.
This advantage is not shared by the quadratic multipliers since the disturbance bound~\eqref{eq:noise_bound_quadratic}
is based on a conservative over-approximation of $\lVert\hat{d}_k\rVert_{\infty}\leq\bar{d}$ and, therefore, the resulting set of learnt uncertainties $\tDel$ does not necessarily shrink if additional data points are included.

To analyze the latter issue in more detail, let us perform the same experiments, but for a fixed
noise level $\bar{d}=0.15$ and varying the data length $N$.
We only consider quadratic~\eqref{eq:noise_multiplier_quadratic} and diagonal~\eqref{eq:noise_multiplier_diagonal} multipliers for reasons of computational complexity.
Figure~\ref{fig:example_H2_multipliers_data} shows that diagonal multipliers \emph{always} outperform the quadratic ones for the considered data lengths.
While the robust closed-loop $\mathcal{H}_2$-norm is non-increasing for longer data with diagonal multipliers, this is not the case for quadratic multipliers, for which the best possible performance is achieved with only $N=40$ data points.
This is an important insight:
A key advantage of data-driven control by~\cite{waarde2022from} in comparison to~\cite{persis2020formulas,berberich2020design} is that the size of the involved LMIs is independent of the data length and, thus, significantly longer data trajectories can be employed.
However, according to Figure~\ref{fig:example_H2_multipliers_data}, increasing the data length $N$ might even deteriorate the guaranteed performance when using the simple noise description~\eqref{eq:noise_bound_quadratic}.
We conclude that our more sophisticated description of disturbance bounds as in Assumption~\ref{ass:noise_multipliers} is instrumental in order to fully exploit the information provided by the available data.

\begin{figure}
\begin{center}
\includegraphics[width=0.49\textwidth]{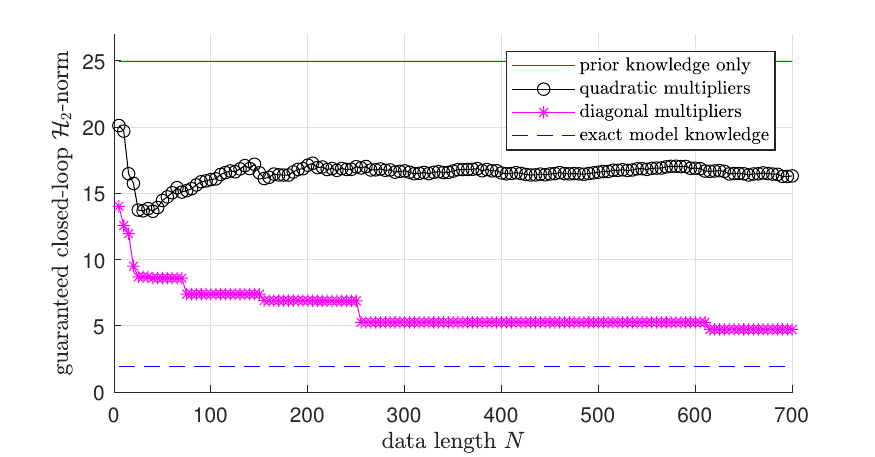}
\end{center}
\caption{Guaranteed closed-loop $\mathcal{H}_2$-norm when using prior knowledge and data (scenario 3) in Section~\ref{subsec:ex_H2}), depending on the data length $N$ and for different disturbance multipliers in Assumption~\ref{ass:noise_multipliers}.}
\label{fig:example_H2_multipliers_data}
\end{figure}

\subsection{Data-driven $\mathcal{H}_{\infty}$-loop-shaping}\label{subsec:ex_Hinf}
In a realistic robust controller design scenario, direct minimization of the $\mathcal{H}_{\infty}$-norm is rarely of practical use.
Instead, one typically includes weights to shape the frequency response of the closed loop.
Such weights constitute a natural form of prior knowledge, which can be handled by our framework as we demonstrate by revisiting Example~\ref{ex:introduction}.
We compute an exact discretization of~\eqref{eq:ex_satellite_LFT} with sampling time $0.05s$ leading to matrices $\bar{A}$, $\bar{B}$, $\bar{C}_z$, $\bar{D}_z$, $\bar{B}_w$, $\bar{B}_d$ of a discrete-time LFT as in~\eqref{eq:sys}.
Our goal is to design a controller which stabilizes the discretized system while rendering the influence of the disturbance $\tilde{d}$ on the deviation of the angle $\theta_2$ from zero and on the control input small.
More precisely, we want to achieve a possibly small $\mathcal{H}_{\infty}$-norm of the channel $d\mapsto\begin{bmatrix}w_1\theta_2\\w_2u\end{bmatrix}$, where $w_1$, $w_2$ are dynamic filters that allow us to trade off the two objectives of small tracking errors and control inputs.
We choose $w_1(z)$ as an exact discretization of the continuous-time low-pass filter $\tilde{w}_1(s)=\frac{0.5}{s+0.005}$
and $w_2(s)=0.1$.
If realizing the filter $w_1$ as
$x^f_{k+1}=A^f_1x^f_k+A^{f}_2x_k+B^fu_k$,  the combined dynamics of the discretized system and the filters 
take the form~\eqref{eq:sys} with
\begin{align*}
&A=\mleft[\begin{array}{c|c}\bar{A}&0\\\hline A^{f}_2&A^f_1
\end{array}\mright],\>\>
B=\mleft[\begin{array}{c}\bar{B}\\\hline B^f\end{array}\mright],\>\>B_w=\mleft[\begin{array}{c}\bar{B}_w\\\hline 0\end{array}\mright],\\
&B_d=\mleft[\begin{array}{c}\bar{B}_d\\\hline 0\end{array}\mright],\>\>
C_z=\mleft[\begin{array}{c|c}\bar{C}&0\end{array}\mright],\>\>D_z=\bar{D}.
\end{align*}
A performance channel according to the above specifications can be defined by choosing $C_e=\mleft[\begin{array}{c|c}0&1\\0&0\end{array}\mright]$, $D_{eu}=\begin{bmatrix}0\\0.1\end{bmatrix}$.
For the controller design, we do not consider any prior knowledge on the uncertainty $\Delta_{\tr}$, i.e., $\Dep=\mathbb{R}^{n_w\times n_z}$.
However, we generate data of length $N=100$ for the discretization of~\eqref{eq:ex_satellite_LFT} by sampling the input $u_k$ uniformly from $[-1,1]$ and
the disturbance $\tilde{d}_k$ with
$\lVert\tilde{d}\rVert_2\leq \bar{d}$ for $\bar{d}=5$.
This leads to the multipliers~\eqref{eq:noise_multiplier_quadratic} with $Q_d=-I$, $S_d=0$, $R_d=\bar{d}^2I$.
Based on Theorem~\ref{thm:robust_perf_nl}, we can now design a static state-feedback controller with performance specification
$Q_p=-\gamma^2I$, $S_p=0$ and $R_p=I$
for the discretized closed-loop system, which results in a guaranteed bound on the closed-loop $\mathcal{H}_{\infty}$-norm  of
$\gamma=0.22$ for the channel $d\mapsto e$.

The Bode plots of the continuous-time open-loop and closed-loop transfer functions are displayed in Figure~\ref{fig:bode} for frequencies below the Nyquist frequency.
The design specifications are clearly met, i.e., the depicted magnitude plots lie below those of the inverse filter dynamics $w_1$, $w_2$. Figure~\ref{fig:bode} also shows the results for a nominal design based on complete model knowledge, which closely match those with data-based synthesis.
To summarize, the proposed framework can exploit measured data to perform loop-shaping, which is a well-studied control problem leading to controllers with good closed-loop performance if the filters are chosen suitably~\cite{skogestad1996multivariable}.

\begin{figure}
\begin{center}
\includegraphics[width=0.5\textwidth]{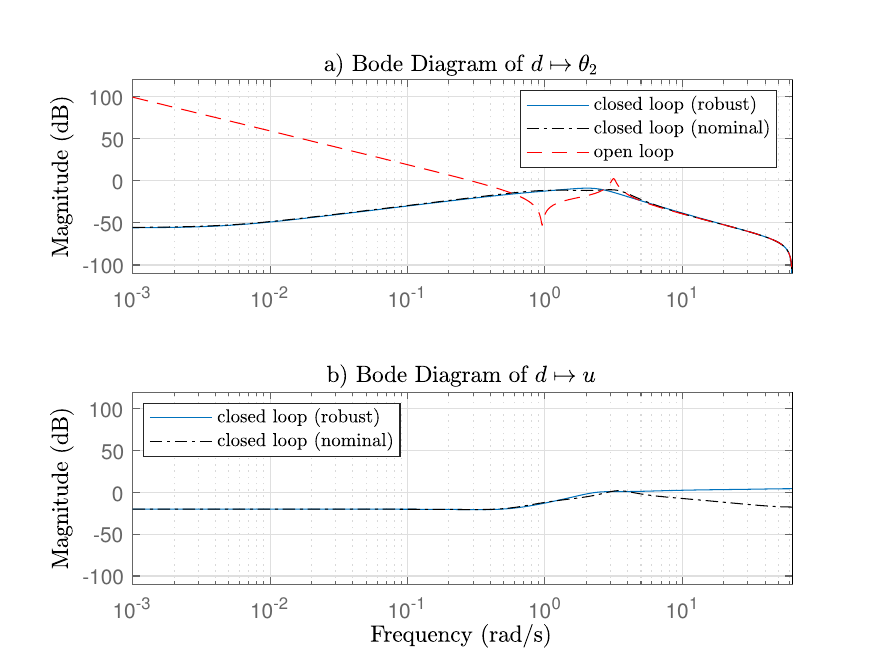}
\end{center}
\caption{Bode plot of the transfer function $d\mapsto e$ for the example in Section~\ref{subsec:ex_Hinf}.
Subfigure a) shows the magnitude of the channel $d\mapsto\theta_2$ in open ($K=0$) and closed loop and Subfigure b) shows the magnitude of the closed-loop channel $d\mapsto u$.
For the closed loop, both the robust design as well as the nominal case with exact model knowledge are displayed.}
\label{fig:bode}
\end{figure}

\section{Conclusion}\label{sec:conclusion}
In this paper, we presented a novel and flexible framework for systematically combining prior knowledge and measured data using robust control theory.
We showed how prior multipliers for a structured uncertainty can be translated into multipliers for a transformed full-block uncertainty, without losing information about the structure.
Next, we used noisy data in order to learn additional multipliers for this full-block uncertainty.
A further key novelty over previous works was the introduction of a general, multiplier-based disturbance description, which unifies and extends noise bounding techniques in the literature.
We then combined the derived multipliers to design controllers with robust stability and performance guarantees.
The generality of our framework allowed for seamless extensions to additional nonlinear uncertainties in the loop or to output-feedback design using input-output data.
Finally, we demonstrated the validity of the proposed approach with two examples, showcasing how simultaneously exploiting measured data and prior knowledge leads to superior performance if compared to purely data-driven approaches.


\bibliographystyle{IEEEtran}
\bibliography{Literature}

\section*{Appendix}
\renewcommand\thesection{\Alph{section}}
\setcounter{section}{0}
\section{Proof of Lemma~\ref{lem:prior_bounds}}\label{app:A}
\noindent\textbf{Proof of $\tDep\supseteq B_w\Dep$:}
Let $\Delta\in\Dep$ and $\tilde{P}\in\bm{\tilde{P}}$.
Then, there exist $P_j\in\bm{P}_j$ with
\begin{align*}
    \tilde{P}=\sum_{j=1}^{\ell}\begin{bmatrix}L_j^\top&0\\0&B_j^\top\end{bmatrix}^\top P_j
    \begin{bmatrix}L_j^\top&0\\0&B_j^\top\end{bmatrix}.
\end{align*}
By definition of $\Dep$, we have
\begin{align}\label{eq:lem_prior_bounds_proof1}
    \begin{bmatrix}\Delta_j^\top\\I\end{bmatrix}^\top
    P_j\begin{bmatrix}\Delta_j^\top\\I\end{bmatrix}\succeq0,\>\>j=1,\dots,\ell.
\end{align}
Now define $\tDe\coloneqq B_w\Delta$ such that
\begin{align}\label{eq:lem_prior_bounds_proof2}
    L_j^\top\tDe^\top=L_j^\top\Delta^\top B_w^\top\stackrel{\eqref{eq:Bw_Delta_Lj_equation}}{=}\Delta_j^\top B_j^\top.
\end{align}
By left- and right-multiplying~\eqref{eq:lem_prior_bounds_proof1} with $B_j$ and $B_j^\top$, respectively, we infer
\begin{align*}
    0&\preceq\begin{bmatrix}\Delta_j^\top B_j^\top\\B_j^\top\end{bmatrix}^\top P_j
    \begin{bmatrix}\Delta_j^\top B_j^\top\\B_j^\top\end{bmatrix}\\
    &\stackrel{\eqref{eq:lem_prior_bounds_proof2}}{=}\begin{bmatrix}\tDe^\top\\ I\end{bmatrix}^\top \begin{bmatrix}L_j^\top&0\\0&B_j^\top\end{bmatrix}^\top
    P_j\begin{bmatrix}L_j^\top&0\\0&B_j^\top\end{bmatrix}\begin{bmatrix}\tDe^\top\\I\end{bmatrix}.
\end{align*}
By summing up and since $\tilde{P}$ was arbitrary, we deduce
\begin{align*}
    \begin{bmatrix}\tDe^\top\\I\end{bmatrix}^\top \tilde{P}\begin{bmatrix}\tDe^\top\\I\end{bmatrix}\succeq0\>\>\text{for all}\>\>
    \tilde{P}\in\bm{\tilde{P}}.
\end{align*}
This implies $\tilde{\Delta}=B_w\Delta\in\tDep$ and hence, since $\Delta\in\Dep$ was arbitrary, $B_w\Dep\subseteq\tDep$.\\
\textbf{Proof of $\tDep\subseteq B_w\Dep$:}
Let $\tDe\in\tDep$, fix $j\in\{1,\dots,\ell\}$, and define $\tDe_j\coloneqq\tDe L_j$.
Further, choose $P_j\in\bm{P}_j$ as in Assumption~\ref{ass:prior_multipliers} such that its left-upper block $Q_j$ is negative definite.
Since $0\in\bm{P}_k$ for $k\neq j$, the definition of $\bm{\tilde{P}}$ implies
\begin{align}\label{eq:lem_prior_bounds_proof9}
    \tilde{P}_j\coloneqq\begin{bmatrix}L_j^\top&0\\0&B_j^\top\end{bmatrix}^\top P_j\begin{bmatrix}L_j^\top&0\\0&B_j^\top\end{bmatrix}\in\bm{\tilde{P}},
\end{align}
compare~\eqref{eq:prior_multiplier_transformed}.
Using $\tDe\in\tDep$, this leads to
\begin{align}\label{eq:lem_prior_bounds_proof3}
    0&\preceq\begin{bmatrix}\tDe^\top\\I\end{bmatrix}^\top\tilde{P}_j\begin{bmatrix}\tDe^\top\\I\end{bmatrix}\\\nonumber
    &=\begin{bmatrix}\tDe^\top\\I\end{bmatrix}^\top\begin{bmatrix}L_j^\top&0\\0&B_j^\top\end{bmatrix}^\top P_j
    \begin{bmatrix}L_j^\top&0\\0&B_j^\top\end{bmatrix}\begin{bmatrix}\tDe^\top\\I\end{bmatrix}\\\nonumber
    &=\begin{bmatrix}(\tDe L_j)^\top\\B_j^\top\end{bmatrix}^\top P_j\begin{bmatrix}(\tDe L_j)^\top\\B_j^\top\end{bmatrix}
    =\begin{bmatrix}\tDe^\top_j\\B_j^\top\end{bmatrix}^\top P_j\begin{bmatrix}\tDe_j^\top\\B_j^\top\end{bmatrix}.
\end{align}
Thus, for any $v\in\mathbb{R}^n$ with $B_j^\top v=0$, we have
\begin{align*}
    0\preceq v^\top\begin{bmatrix}\tDe_j^\top\\B_j^\top\end{bmatrix}^\top P_j\begin{bmatrix}\tDe_j^\top\\B_j^\top\end{bmatrix}v=
    v^\top\tDe_j Q_j\tDe_j^\top v\stackrel{Q_j\prec0}{<}0
\end{align*}
if $\tDe_j^\top v\neq0$.
This guarantees that $\tDe_j^\top v=0$.
In conclusion, there exists some $\Delta_j$ with $\tDe_j=B_j\Delta_j$.
Let now $P_j\in\bm{P}_j$ be arbitrary and note that~\eqref{eq:lem_prior_bounds_proof9} still holds.
Then, as in~\eqref{eq:lem_prior_bounds_proof3},
\begin{align*}
    0&\preceq\begin{bmatrix}\tDe_j^\top\\B_j^\top\end{bmatrix}^\top P_j\begin{bmatrix}\tDe_j^\top\\B_j^\top\end{bmatrix}
    =B_j\begin{bmatrix}\Delta_j^\top\\I\end{bmatrix}^\top P_j\begin{bmatrix}\Delta_j^\top\\I\end{bmatrix}B_j^\top.
\end{align*}
By using the fact that $B_j$ has full column rank, we infer $\begin{bmatrix}\Delta_j^\top\\I\end{bmatrix}^\top
    P_j\begin{bmatrix}\Delta_j^\top\\I\end{bmatrix}\succeq0$, i.e., $\Delta_j\in\bm{\Delta}_j$.
Moreover, since $j$ was arbitrary, there exists $\Delta=\mathrm{diag}_{j=1}^{\ell}(\Delta_j)\in\Dep$ with $B_w\Delta=\begin{bmatrix}
B_1\Delta_1&\dots&B_{\ell}\Delta_{\ell}\end{bmatrix}=\begin{bmatrix}\tDe_1&\dots&\tDe_{\ell}\end{bmatrix}=\tDe$.
This shows $\tDe\in B_w\Dep$ and, thus, $\tDep\subseteq B_w\Dep$.
\qed

\section{Proof of Lemma~\ref{lem:data_bound}}\label{app:B}
Defining $\bm{\tilde{D}}\coloneqq B_d\bm{D}$, $\Delta\in\Del$ is equivalent to
\begin{align}\label{eq:data_bound2}
    M=\tilde{D}+B_w\Delta Z\>\>\text{for some}\>\>\tilde{D}\in\bm{\tilde{D}}.
\end{align}
With $\bm{\tilde{P}}_d\coloneqq\begin{bmatrix}I&0\\0&B_d^\top\end{bmatrix}^\top\!\!\! \bm{P}_d
    \begin{bmatrix}I&0\\0&B_d^\top\end{bmatrix}$
and using negative definiteness of the left-upper block of some $P_d\in\bm{P}_d$ (Assumption~\ref{ass:noise_multipliers}), it can be shown analogously to Lemma~\ref{lem:prior_bounds} that
\begin{align}\label{eq:tilde_D_def}
    \bm{\tilde{D}}=\left\{\tilde{D}\Bigm|
    \begin{bmatrix}\tilde{D}^\top\\I\end{bmatrix}^\top\tilde{P}_d
    \begin{bmatrix}\tilde{D}^\top\\I\end{bmatrix}\succeq0\>\>\text{for all}\>\>\tilde{P}_d\in\bm{\tilde{P}}_d\right\}.
\end{align}
Defining $\tilde{\Delta}\coloneqq B_w\Delta$,~\eqref{eq:data_bound2} is equivalent to $M-\tilde{\Delta} Z\in\bm{\tilde{D}}$.
For any $\tilde{P}_d\in\bm{\tilde{P}}_d$, we can hence infer
\begin{align*}
    0&\preceq\begin{bmatrix}M^\top-Z^\top\tilde{\Delta}^\top\\I\end{bmatrix}^\top
    \tilde{P}_d\begin{bmatrix}M^\top-Z^\top\tilde{\Delta}^\top\\I\end{bmatrix}\\
    &=\begin{bmatrix}\tilde{\Delta}^\top\\I\end{bmatrix}^\top
    \begin{bmatrix}-Z^\top&M^\top\\0&I\end{bmatrix}^\top
    \tilde{P}_d\begin{bmatrix}-Z^\top&M^\top\\0&I\end{bmatrix}
    \begin{bmatrix}\tilde{\Delta}^\top\\I\end{bmatrix},
\end{align*}
which proves that $B_w\Delta=\tilde{\Delta}\in\tDel$.

\section{Proof of Lemma~\ref{lem:combined_multipliers}}\label{app:C}
\noindent
\textbf{Proof of $\tDecom\supseteq B_w\Decom$:}
This follows directly from $\Decom=\Dep\cap\Del$ together with $B_w\Dep\subseteq\tDep$ and $B_w\Del\subseteq\tDel$.\\
\textbf{Proof of $\tDecom\subseteq B_w\Decom$:}
Suppose $\tilde{\Delta}\in\tDecom$, i.e., $\tilde{\Delta}\in\tDep$ and $\tilde{\Delta}\in\tDel$.
Then, according to Lemma~\ref{lem:prior_bounds}, we have $\tilde{\Delta}\in B_w\Dep$ such that there exists $\Delta\in\Dep$ with $\tilde{\Delta}=B_w\Delta$.
For an arbitrary $\tilde{P}_d\in\bm{\tilde{P}}_d$, we then infer
\begin{align*}
    \begin{bmatrix}M^\top-Z^\top\tilde{\Delta}^\top\\I\end{bmatrix}^\top
    \tilde{P}_d\begin{bmatrix}M^\top-Z^\top\tilde{\Delta}^\top\\I\end{bmatrix}\succeq0,
\end{align*}
i.e., $\tilde{D}\coloneqq M-\tilde{\Delta}Z$ satisfies $\tilde{D}\in\bm{\tilde{D}}$.
This, in turn, implies the existence of $D\in\bm{D}$ such that $M=B_dD+\tilde{\Delta}Z$.
Using $\tilde{\Delta}=B_w\Delta$, we conclude $\Delta\in\Del$ and, hence, $\tilde{\Delta}\in B_w\Decom$.
\qed

%
\begin{IEEEbiography}[{\includegraphics[width=1in,height=1.25in,clip,keepaspectratio]{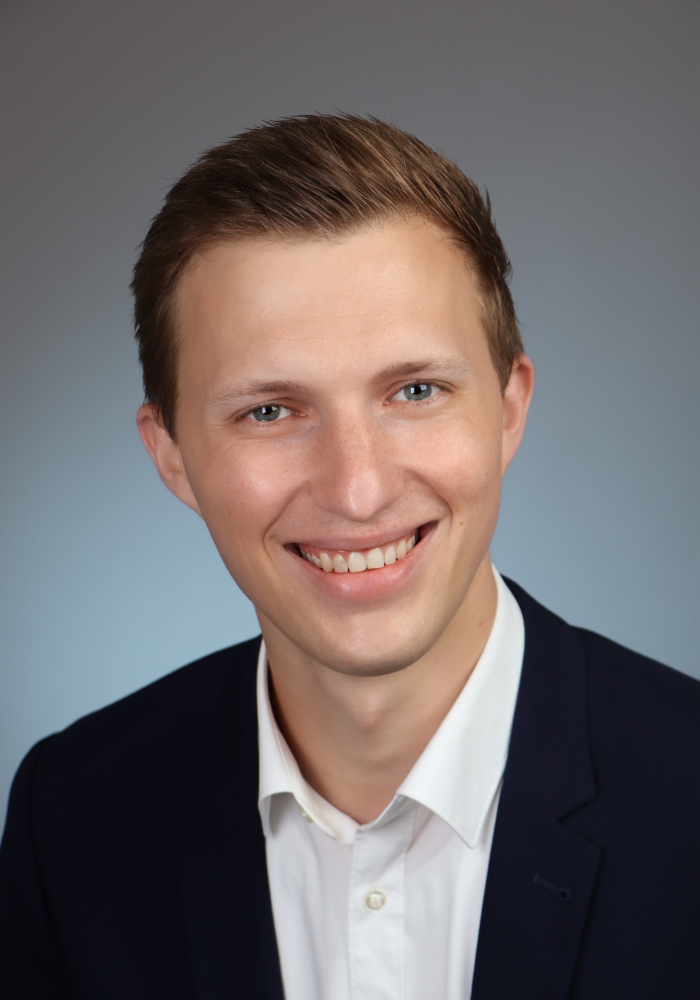}}]{Julian Berberich}
received the Master's degree in Engineering Cybernetics from the University of Stuttgart, Germany, in 2018. Since 2018, he has been a Ph.D. student at the Institute for Systems Theory and Automatic Control under supervision of Prof. Frank Allg\"ower and a member of the International Max-Planck Research School (IMPRS) at the University of Stuttgart. He has received the Outstanding Student Paper Award at the 59th Conference on Decision and Control in 2020. His research interests are in the area of data-driven analysis and control.
\end{IEEEbiography}

\begin{IEEEbiography}[{\includegraphics[width=1in,height=1.25in,clip,keepaspectratio]{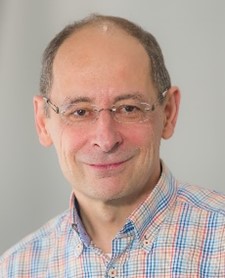}}]
{Carsten W. Scherer}
received his Ph.D. degree in mathematics from the University of W\"urzburg (Germany) in 1991. In 1993, he joined Delft University of Technology, The Netherlands, where he held positions as an assistant, associate and full professor at the Delft Center for Systems and Control. Since March 2010 he holds the SimTech Chair for Mathematical Systems Theory at the Department of Mathematics, University of Stuttgart, Germany.

Dr. Scherer acted as the chair of the IFAC technical committee on Robust Control, and served as an AE for the IEEE Transactions on Automatic Control, Automatica, Systems \& Control Letters and the European Journal of Control. Since 2013 he is an IEEE fellow “for contributions to optimization-based robust controller synthesis“.

Dr. Scherer’s main research activities cover various topics in applying optimization techniques for developing new advanced controller design algorithms and their application to mechatronics and aerospace systems.
\end{IEEEbiography}

\begin{IEEEbiography}[{\includegraphics[width=1in,height=1.25in,clip,keepaspectratio]{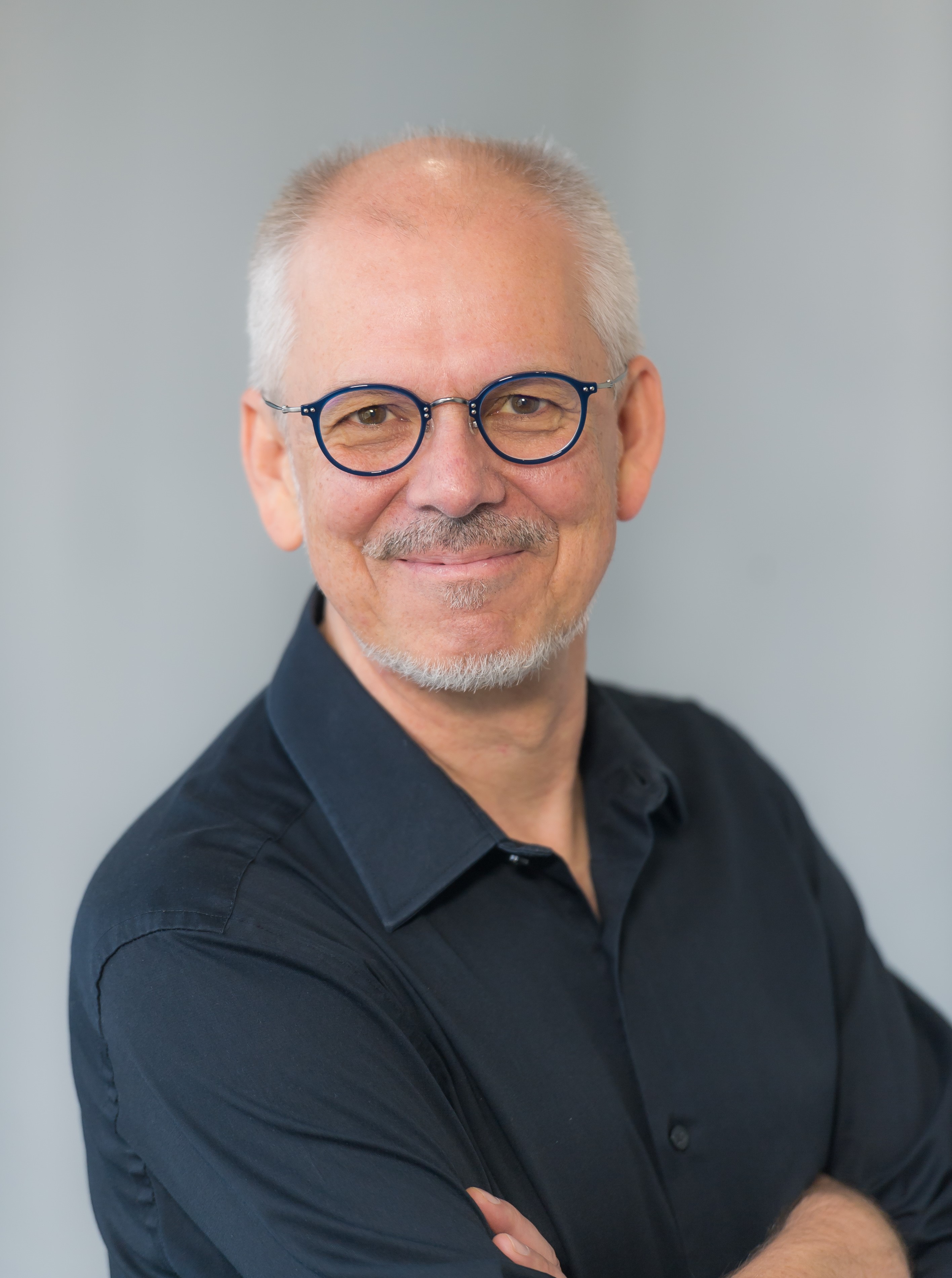}}]{Frank Allg\"ower}
is professor of mechanical engineering at the University of Stuttgart, Germany, and Director of the Institute for Systems Theory and Automatic Control (IST) there.\\
Frank is active in serving the community in several roles: Among others he has been President of the International Federation of Automatic Control (IFAC) for the years 2017-2020, Vice-president for Technical Activities of the IEEE Control Systems Society for 2013/14, and Editor of the journal Automatica from 2001 until 2015. From 2012 until 2020 Frank served in addition as Vice-president for the German Research Foundation (DFG), which is Germany’s most important research funding organization. \\
His research interests include predictive control, data-based control, networked control, cooperative control, and nonlinear control with application to a wide range of fields including systems biology.
\end{IEEEbiography}







\end{document}